\newcommand{\E}{\mathbf{E}}
\newcommand{\R}{\mathbb{R}}
\newcommand{\rr}{\mathbb{R}}
\newcommand{\N}{\mathbb{N}}
\newcommand{\x}{\mathbf{x}}
\newcommand{\calF}{\mathcal{F}}
\newcommand{\calm}{\mathcal{M}}
\DeclareMathOperator{\tr}{\mathrm{tr}}
\title{Identifiability of phylogenetic parameters from $k$-mer data under the coalescent}
\author{Chris Durden}
\author{Seth Sullivant}
\affil{Department of Mathematics, North Carolina State University}
\newtheorem{theorem}{Theorem}[section]
\newtheorem{lemma}[theorem]{Lemma}
\newtheorem{proposition}[theorem]{Proposition}
\newtheorem{corollary}[theorem]{Corollary}
\theoremstyle{definition}
\newtheorem{definition}[theorem]{Definition}
\newtheorem{example}[theorem]{Example}
\begin{document}
\tikzset{middlearrow/.style={
        decoration={markings,
            mark= at position 0.5 with {\arrow{#1}} ,
        },
        postaction={decorate}
    }
}
\maketitle
\begin{abstract}
    Distances between sequences based on their $k$-mer frequency counts can be used to reconstruct phylogenies without first computing a sequence alignment.
    Past work has shown that effective use
    of $k$-mer methods depends on 1)  model-based corrections to
distances based on $k$-mers and 2) breaking long sequences into blocks to
obtain repeated trials from the  sequence-generating process.  
Good performance of such methods is based on having many
high-quality blocks with many homologous sites, which can be problematic to 
guarantee a priori.  

Nature provides natural blocks of sequences into homologous
regions---namely, the genes.  However, directly using past work in this setting
is problematic because of possible discordance between different gene trees
and the underlying species tree.    Using the multispecies coalescent model as a basis,
we derive model-based moment formulas that involve the divergence times and the 
coalescent parameters.  From this setting, we prove  identifiability results for the tree
 and branch length parameters  under the Jukes-Cantor model of sequence mutations.
\end{abstract}

\section{Introduction}

Phylogenetic tree reconstruction methods that compare character states at homologous sites of molecular sequences require sequence alignment methods that identify the homologous sites.
    Many sequence alignment methods are progressive---that is, they first compute alignments for sequences from a subset of closely-related taxa and then assemble these results to generate an alignment for all of the taxa.
In this context, a guide tree provides the information about relatedness which is used to control the order in which alignments are assembled.
Phylogenetic reconstruction methods which do not rely on aligned sequences are needed to construct such guide trees.

Although multiple sequence alignment algorithms have been designed to reflect an insertion and deletion process that occurs along a phylogenetic tree, most current methods for constructing the guide tree itself are not based on any explicit models of evolution.
    We therefore aim to develop methods based on widely-used evolutionary modeling assumptions that can be used in the context of multiple sequence alignment to reconstruct a phylogenetic tree.
A natural approach is to record the k-mer frequencies of sequences and to assess relatedness among taxa by computing distances based on these frequencies.

For example, the guide trees computed in 
MUSCLE \cite{Edgar2004b, Edgar2004a} and Clustal Omega \cite{Blackshields2010, Sievers2011}
are based on using k-mer frequencies.  However, those k-mer methods are ad hoc, in the sense
that they are not derived based on any evolutionary modeling assumptions.
We take the point of
view that a desirable property of any phylogenetic method is that it should be
\emph{statistically consistent} under widely used phylogenetic modeling assumptions.  
That is,
if the method is applied to data from a standard model, the method should reconstruct
the correct tree with probability tending to $1$ as the amount of data increases.

In past work, Allman, Rhodes and the second author \cite{allman2017statistically}
devised a statistically consistent $k$-mer method for phylogenetic tree reconstruction
based on models for point substitutions only, and without an insertion and deletion (indel) process.  
This method utilizes a model-based correction to the $k$-mer distances between pairs of sequences.
A key idea in \cite{allman2017statistically}, originating in the work of Daskilaskis and Roch  
\cite{Daskalakis2013}, is to break sequences into blocks to get a distribution of $k$-mer
distances.  From this distribution, various features of the underlying substitution
process can be extracted.  More specifically, to compute a $k$-mer distance between
two sequences $S_1$ and $S_2$, we subdivide each sequence into $r$ subsequences
$S_{11}, \ldots, S_{1r}$, and $S_{21}, \ldots, S_{2r}$, respectively.  The subdivision
is chosen so that, for each $i$, $S_{1i}$ and $S_{2i}$ consist of mostly orthologous sites.
Then, to each pair of subsequences $S_{1i}$ and $S_{2i}$, we compute $k$-mer vectors
$X_{1i}$ and $X_{2i}$, compute the appropriate $k$-mer distance between $X_{1i}$ and $X_{2i}$
and average the results from $i = 1, \ldots, r$.

This procedure greatly increases the 
accuracy of the $k$-mer method if we assume all sequences are drawn from the same underlying
phylogenetic process, because the average of independent draws from the same underlying
process converges to the true underlying parameter being estimated, by the law of large numbers.
Since we do not know a priori exactly when parts of the sequence are orthologous, a
subdivision into a small number of parts $r$ of roughly equal lengths guarantees that most
sites in the two sequences $S_{1i}$ and $S_{2i}$ are orthologous.  
Heuristic
modification of this procedure is used in the case when the sequences are generated from 
a model with an indel process.  
Even with a moderate
indel process at work, the number of blocks used cannot be too big without running into trouble, since the number of orthologous sites within each block becomes insufficient.

Fortunately, nature provides blocks that automatically correspond one to another---namely, the genes.
However, comparing sequences from many different genes requires the analysis
of more complex probabilistic models, describing how the evolutionary histories of genes are related to the history of the species from which they are derived.
The phylogenetic history of a set of species and the history of any given set of orthologous genes from those species are represented by a species tree and a gene tree, respectively.
In particular, it is well-known that
gene trees need not be the same, and that gene trees need not be equal to the
underlying species tree \cite{pamilo1988relationships}.  The simplest model to handle this 
discrepancy is the \emph{multispecies coalescent model}.
Given a species tree with branch lengths, the multispecies coalescent gives a probability
distribution on gene trees \cite{takahata1995divergence}.  
With such a gene tree, we can then use a standard model of 
sequence evolution to describe the probability distribution of gene sequences.

Our goal in this paper is to extend the results of \cite{allman2017statistically}
to the more general setting from the previous paragraph where blocks correspond to
gene sequences, generated by a mutation process on gene trees that come from the multispecies coalescent.
In Section \ref{sec:expected}, we derive a generalization of the main formula from \cite{allman2017statistically},
which is a calculation of the expected squared Euclidean distance between $k$-mer vectors
of sequences, when the underlying gene trees are generated randomly from the coalescent process.  
Using this formula, in Sections \ref{sec:local} and \ref{sec:treetopo} we prove identifiability results on the underlying model
parameters (unrooted species tree and numerical parameters) which is the first step
towards developing a statistically consistent method based on $k$-mers.  
Section \ref{sec:kand1} contains some further identifiability results where combinations of $k$-mer
vectors of different sizes are used.  We conclude in Section \ref{sec:conclusion} with
a discussion of further directions and ideas about how our identifiability results might
lead to new algorithms for constructing trees from $k$-mer data.


\section{Expected $k$-mer Distances}\label{sec:expected}

In this section, we give a derivation of the expected Euclidean distance between
$k$-mer vectors of sequences when their gene tree is generated by the coalescent model and 
mutations arise via any stationary Markov model.  We also present the special form of
this distance in the case of the Jukes-Cantor substitution model, from which we derive
our later results.

We first define what we mean by the $k$-mer vector of a sequence.
As mentioned above, the $k$-mers of a given sequence are the subsequences of length $k$.
We form a $k$-mer vector by recording the number of times each $k$-mer occurs in the given sequence.
More precisely, let $S = s_1s_2 \cdots s_m \in [L]^m$ be a sequence of length $m$ in the alphabet $[L] = \{1, \ldots, L\}$.
For $k \leq m$, a $k$-mer is a subsequence $s_ps_{p+1} \cdots s_{p+k-1}$ for some starting position $p \in \{1, \ldots, m-k+1\}$.
The $k$-mer count vector $X$, associated to the sequence $S$, is the
vector of length $L^k$ whose coordinates are indexed by the words $W \in [L]^k$, and where the component $X^W$, corresponding to word $W$, records the number of times $W$ appears as a subsequence in $S$. 

We next consider two sequences $S_1$ and $S_2$ descended from a common ancestor, and we define a $k$-mer distance between them as follows.
First, we assume that each site in each sequence is generated independently by a Markov mutation process and that the distribution at each site is stationary.
Let $Q$ be the rate matrix of the mutation process, and let $\pi$ be the associated stationary distribution.
By stationarity, the probability of a $k$-mer in either sequence is $(\pi^W)_{W \in [L]^k}$, where $\pi^W = \prod_{i \in [k]} \pi^{w_i}$.
The $k$-mer distance between sequences $S_1$ and $S_2$ is then $\sum_{W \in [L]^k} \frac{1}{\pi^W}(X^W_1 - X^W_2)^2$.
Allman, Rhodes, and the second author previously derived the expected value of this $k$-mer distance for a pair of orthologous sequences with divergence time $\tau$ under such a mutation process.

\begin{proposition} \cite{allman2017statistically} \label{prop:allmanetc}
Let $S_1$ and $S_2$ be two sequences of length $m$ generated from an indel-free
Markov model with transition matrix $M = \exp(Q \tau)$, where $Q$ is
the rate matrix, and stationary initial distribution $\pi$.
Let $X_1$ and $X_2$ be the resulting $k$-mer count vectors.  Then
\begin{equation*}
    \E \left[ \sum_{W \in [L]^k} \frac{1}{\pi^W} (X^W_{1}-X^W_{2})^2  \right]
    = 2(m-k+1) (L^k-(\tr M)^k).
\end{equation*}
\end{proposition}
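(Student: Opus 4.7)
The plan is to expand the squared $k$-mer distance as a double sum over starting positions and then organize contributions by the offset $d = q - p$ between the two $k$-mers.

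First I would write $X_j^W = \sum_{p=1}^{m-k+1} \mathbb{1}_{p,j,W}$, where $\mathbb{1}_{p,j,W}$ indicates that the $k$-mer at position $p$ of $S_j$ equals $W$, and set $Y_p^W := \mathbb{1}_{p,1,W} - \mathbb{1}_{p,2,W}$. Expanding the square gives
\[
\sum_W \frac{1}{\pi^W}\,\E\bigl[(X_1^W - X_2^W)^2\bigr]
= \sum_W \frac{1}{\pi^W}\sum_{p,q}\E[Y_p^W Y_q^W],
\]
and by the symmetry $S_1 \leftrightarrow S_2$ the inner summand equals $2\bigl(\E[\mathbb{1}_{p,1,W}\mathbb{1}_{q,1,W}] - \E[\mathbb{1}_{p,1,W}\mathbb{1}_{q,2,W}]\bigr)$. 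I would then partition the pairs $(p,q)$ by $d = q-p$ into three regimes and handle each separately.

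The easy regimes are $|d| \ge k$ and $d = 0$. When $|d| \ge k$, the two $k$-mers cover disjoint alignment columns, so column-independence in the site-iid model forces both joint probabilities to equal $(\pi^W)^2$ and the pair contributes zero. When $d = 0$, stationarity gives $\E[\mathbb{1}_{p,1,W}\mathbb{1}_{p,2,W}] = \pi^W \prod_{j=1}^k M_{w_j,w_j}$, so dividing by $\pi^W$ and summing over $W$ yields $2(L^k - (\tr M)^k)$ for each of the $m-k+1$ diagonal pairs. That already produces the claimed total, so the proposition reduces to showing the near-diagonal pairs $0 < |d| < k$ contribute zero.

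The main obstacle is this near-diagonal cancellation. A careful bookkeeping of the columns covered by each $k$-mer, using site-independence within each sequence and the joint law $P[s_c^{(1)}=a,s_c^{(2)}=b] = \pi^a M_{ab}$ in shared columns, gives
\[
\frac{\E[\mathbb{1}_{p,1,W}\mathbb{1}_{q,1,W}]}{\pi^W} = \mathbb{1}\bigl[W \text{ is } d\text{-periodic}\bigr]\prod_{j=k-d+1}^{k}\pi^{w_j},
\quad
\frac{\E[\mathbb{1}_{p,1,W}\mathbb{1}_{q,2,W}]}{\pi^W} = \prod_{j=d+1}^{k} M_{w_j,w_{j-d}}\prod_{j=k-d+1}^{k}\pi^{w_j}.
\]
The delicate step is to verify that both expressions sum to $1$ over $W \in [L]^k$. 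For the first, a $d$-periodic word is determined by its first $d$ symbols, and the residual factors $\prod_{j=k-d+1}^k \pi^{w_j}$ rearrange (as indices cycle through all residues modulo $d$) to $\pi^{w_1}\cdots\pi^{w_d}$, whose total mass over $[L]^d$ is $1$. For the second, I would partition $\{1,\ldots,k\}$ by residue modulo $d$, which decomposes the product of $M$-entries into $d$ independent chains along arithmetic progressions of step $d$; each chain telescopes from the earliest index outward using row-stochasticity of $M$, leaving a final factor $\sum_a \pi^a = 1$. With both expressions summing to $1$, the near-diagonal contributions cancel pairwise and only the diagonal contribution survives, giving the stated formula.
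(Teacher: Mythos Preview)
This proposition is quoted from \cite{allman2017statistically} and is not proved in the present paper, so there is no in-paper argument to compare against.

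Your argument is correct. The decomposition by offset $d=q-p$, the trivial vanishing for $|d|\ge k$, the diagonal computation yielding $2(L^k-(\tr M)^k)$ per position, and the cancellation for $0<|d|<k$ via the observation that both normalized joint probabilities sum to $1$ over $W\in[L]^k$ are all sound. The telescoping of the between-sequence term along residue classes modulo $d$ using row-stochasticity of $M$ is exactly the right mechanism, and your rearrangement of $\prod_{j=k-d+1}^{k}\pi^{w_j}$ into $\prod_{i=1}^{d}\pi^{w_i}$ for $d$-periodic words is valid because $\{k-d+1,\ldots,k\}$ hits each residue class modulo $d$ exactly once.

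One small remark on the reduction step: the pointwise identity you invoke, $\E[\mathbb{1}_{p,1,W}\mathbb{1}_{q,2,W}]=\E[\mathbb{1}_{p,2,W}\mathbb{1}_{q,1,W}]$, is equivalent to exchangeability of $(S_1,S_2)$, i.e.\ detailed balance $\pi_a M_{ab}=\pi_b M_{ba}$. If you do not wish to assume reversibility, simply note that after summing over all ordered pairs $(p,q)$ the two cross sums coincide anyway, since $\sum_{p,q}\E[\mathbb{1}_{p,2,W}\mathbb{1}_{q,1,W}]=\sum_{p,q}\E[\mathbb{1}_{q,2,W}\mathbb{1}_{p,1,W}]$; the rest of your argument then goes through unchanged.
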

If $\lambda_1, \ldots, \lambda_L$ are the (not necessarily distinct) eigenvalues of $Q$, then the trace of the transition matrix $M = \exp(Q \tau)$ can be computed from these eigenvalues: 
\begin{align*}
    \tr M = \sum_{l=1}^L e^{\lambda_l \tau}.
\end{align*}

Note that, to use Proposition \ref{prop:allmanetc} in practice, we estimate the expected $k$-mer distance by dividing the sequence into blocks and computing an average over the blocks, as discussed in the introduction.
However, to discuss the value of the expectation, we only work with a single
sequence (or, equivalently, a single block) here and throughout. Thus, we have suppressed the index $i$ that we used in the introduction to refer to individual blocks.


We next generalize the expected $k$-mer distance formula of Proposition \ref{prop:allmanetc} 
 by allowing the divergence times of the sequences to vary according to the multispecies coalescent.
 We start with a brief overview of the coalescent and multispecies coalescent models.
 For our purposes it is sufficient to consider a special case of the multispecies coalescent model, consisting of only two species.
 We thus describe the two-species coalescent model in detail and we use this to derive our coalescent-based $k$-mer distance formula.

 The original \textit{$n$-coalescent}, as formulated by Kingman in 1982 \cite{kingman1982coalescent}, is a stochastic (Markov) process that represents the hierarchical history of family relationships of a set of $n$ objects.
 These objects are said to coalesce when two blocks of the partition merge to form a single larger block.
 In the context of this paper, the coalescent process provides a stochastic model of the genealogical history of $n$ molecular sequences.
 A realization of the coalescent process gives a gene tree, which represents the pattern of the coalescence events among the lineages as they extend back in time to their most recent common ancestor.
The $n$-coalescent is commonly used to model gene trees of sequences, based on the assumption that they are sampled from a single, large, randomly-mating population.
In this model, all possible coalescence events (associated with all pairs of distinct lineages) occur at the same rate, which depends on the population size.

The \textit{multispecies coalescent model} with $n$ species extends Kingman's $n$-coalescent model.
In the multispecies coalescent process, coalescence events are constrained according to a species tree parameter, so that the process only generates gene trees that are consistent with the phylogenetic relationships given by the species tree.
Specifically, genes derived from two taxa cannot coalesce more recently than the taxa themselves diverged, as delineated by the species tree.
To represent this pictorially, the branches of the species tree are drawn with thick branches, and gene trees are drawn embedded within the species tree, as in Figure \ref{fig:twospeciescoalescent}.
We will assume, for simplicity, that there is a one-to-one correspondence between the leaves of the species tree and the leaves of any embedded gene tree.

Once a population size is assigned to each ancestral taxon in the species tree, the multispecies coalescent model determines a probability distribution on the gene trees.
The full probability distribution of gene trees under the $n$-species coalescent model has been described by Rannala and Yang \cite{rannala2003bayes}, but the $2$-species case is sufficient for our purposes, since we only calculate expected values of $k$-mer distances for one pair of species at a time.
The basic structure of the two-species coalescent model, showing the gene tree embedded within a species tree with two taxa, is illustrated in Figure \ref{fig:twospeciescoalescent}.

The expected $k$-mer distance between two orthologous sequences from two species only depends on the divergence time for the corresponding two-leaf gene trees.
We thus use the distribution of the divergence time, based on the two-species coalescent model, along with Proposition \ref{prop:allmanetc} to derive our generalized expected $k$-mer distance.
After computing the expected $k$-mer distance for each species pair, we will assemble these pairwise distances together to obtain a collection of expected $k$-mer distances, which we would like to use to estimate phylogenetic parameters.

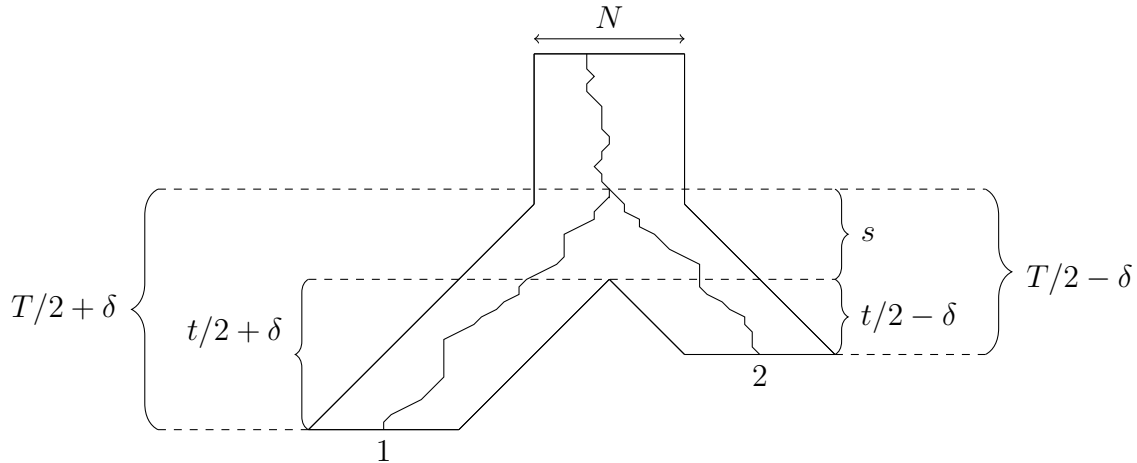
\begin{figure}
\begin{centering}
\begin{tikzpicture}[scale=4]
        \draw[fill] (0.5,0) -- (1,0);
        \draw[fill] (1,0) -- (0.5,0.5);
        \draw[fill] (0.5,0.5) -- (0.5,1);
        \draw[fill] (0.5,1) -- (0,1);
        \draw[fill] (0,1) -- (0,0.5);
        \draw[fill] (0,0.5) -- (-0.75,-0.25);
        \draw[fill] (-0.75,-0.25) -- (-0.25,-0.25);
        \draw[fill] (-0.25,-0.25) -- (0.25,0.25);
        \draw[fill] (0.25,0.25) -- (0.5,0);

        \draw [decorate,decoration={brace,mirror,amplitude=5pt},xshift=0pt,yshift=0pt]
        (1.00,0) -- (1.00,0.25);
        \node[anchor=west] at (1.05,0.125) {$t/2-\delta$};
        \draw [decorate,decoration={brace,mirror,amplitude=5pt},xshift=0pt,yshift=0pt]
        (1.00,0.25) -- (1.00,0.55);
        \node[anchor=west] at (1.05,0.4) {$s$};

        \draw [decorate,decoration={brace,mirror,amplitude=10pt},xshift=0pt,yshift=0pt]
        (1.5,0) -- (1.5,0.55);
        \node[anchor=west] at (1.6,0.25) {$T/2-\delta$};

        \draw [decorate,decoration={brace,amplitude=5pt},xshift=0pt,yshift=0pt]
        (-0.75,-0.25) -- (-0.75,0.25);
        \node[anchor=east] at (-0.8,0.075) {$t/2+\delta$};

        \draw [decorate,decoration={brace,amplitude=10pt},xshift=0pt,yshift=0pt]
        (-1.25,-0.25) -- (-1.25,0.55);
        \node[anchor=east] at (-1.35,0.15) {$T/2+\delta$};

        \draw[dashed] (-0.75,0.25) -- (1,0.25);
        \draw[dashed] (-1.25,0.55) -- (1.5,0.55);
        \draw[dashed] (-1.25,-0.25) -- (-0.75,-0.25);
        \draw[dashed] (1,0) -- (1.5,0);

        \draw[<->]
        (0,1.05) -- (0.5,1.05);
        \node[anchor=south] at (0.25,1.05) {$N$};
        \node[anchor=north] at (-0.5,-0.25) {$1$};
        \node[anchor=north] at (0.75,0) {$2$};

\pgfmathsetseed{7}
\draw (-0.5,-0.25) 
        \foreach \i 
        [ 
                evaluate=\angle using round(2*rnd-0.5)*180, 
                evaluate=\x using 0.025*(1+round(2*(rnd-0.5))), 
                evaluate=\y using 0.025 
        ] 
        in {1,...,31}{ -- ++(\x, \y) }; 
\draw (0.75,0) 
        \foreach \i 
        [ 
                evaluate=\angle using round(2*rnd-0.5)*180, 
                evaluate=\x using 0.025*(-1+round(2*(rnd-0.5))), 
                evaluate=\y using 0.025 
        ] 
        in {1,...,22}{ -- ++(\x, \y) }; 
\draw (0.25,0.52) -- (0.25,0.55);
\draw (0.25,0.55) 
        \foreach \i 
        [ 
                evaluate=\angle using round(2*rnd-0.5)*180, 
                evaluate=\x using 0.025*(round(2*(rnd-0.5))), 
                evaluate=\y using 0.025 
        ] 
        in {1,...,18}{ -- ++(\x, \y) }; 

\end{tikzpicture}
\caption{The two-species coalescent model. The quantity $\delta$ is introduced solely for the purpose of illustrating that the species tree is not ultrametric. The species divergence time $t$ represents the total evolutionary time separating species $1$ and $2$. The sequence divergence time $T$ represents the total evolutionary time separating a pair of orthologous sequences from species $i$ and $j$. The time between speciation and coalescence is denoted by $s$. In the coalescent model the distribution of $s$ depends on the ancestral population size $N$.}
    \label{fig:twospeciescoalescent}
\end{centering}
\end{figure}

We now formulate precisely the model that we use to derive the expected $k$-mer distance between sequences when the gene tree varies according to the multispecies coalescent.
We consider two species, and we let $t$ denote the speciation time, in true time units.
This time represents the total evolutionary time separating the species.
Thus it is given by the sum of the branch lengths along the species tree leading from the point where the species diverge to each of the two leaves (See Figure \ref{fig:twospeciescoalescent}).
The times along each of these branches are not assumed to be equal.
In other words, we do not assume that our species tree is ultrametric.
We consider pairs of orthologous sequences, where each pair consists of a sequence from one species along with an ortholog from the other species.
The divergence time of any pair of orthologs will exceed the speciation time $t$ by some amount $2s$, representing twice the time between speciation and coalescence \cite{takahata1986attempt}.
Thus, the sequence divergence time is given by $\tau = t+2s$.
The coalescence time $s$ depends on the size of the population ancestral to the two species, which we denote by $N$.
A graphical representation of the quantities in this model is shown in Figure \ref{fig:twospeciescoalescent}.

To generalize Proposition \ref{prop:allmanetc}, we allow the gene trees of orthologous sequences to vary according to the multispecies coalescent, so that their divergence times $\tau$ are realizations of a random variable $T$.
Under this model the divergence time is given by $T = t+2s$, where the coalescence time $s$ is exponentially distributed.
Since the size of the ancestral population is $N$, the rate of coalescence is $1/N$, and 
\begin{align*}
    s \sim \text{Exp}(1/N).
\end{align*}

We compute the expected $k$-mer distance with respect to the distribution of 
$T$, by interpreting Proposition \ref{prop:allmanetc} as the expected $k$-mer 
distance given that $T = \tau$, and using the law of total expectation:
\begin{align*}
    &\E \left[ \sum_{W \in [L]^k} \frac{1}{\pi^W} (X^W_{1}-X^W_{2})^2 \right] \\
    &         \qquad =\E \left[\E \left[ \sum_{W \in [L]^k} \frac{1}{\pi^W} (X^W_{1}-X^W_{2})^2 ~\bigg|~ T = t + 2s \right]\right]\\
    &         \qquad = \E \left[2(m-k+1) \left(L^k-\left(\sum_{l=1}^L\exp(\lambda_l (t+2s))\right)^k\right)\right]\\
    &         \qquad = 2(m-k+1) \left(L^k-\sum_{q_1+\ldots+q_L=k} { k \choose q_1, \ldots, q_L } \E \left[\exp\left((\lambda_1 q_1 + \ldots + \lambda_L q_L) (t+2s)\right)\right]\right)\\
    &         \qquad = 2(m-k+1) \bigg(L^k-\sum_{q_1+\ldots+q_L=k} { k \choose q_1, \ldots, q_L } \exp\left((\lambda_1 q_1 + \ldots + \lambda_L q_L)t\right)\\
    &\qquad \qquad \qquad \qquad \qquad \qquad \qquad \qquad \qquad \qquad \qquad \cdot \E \left[\exp(2 (\lambda_1 q_1 + \ldots + \lambda_L q_L) s)\right]\bigg)\\
    &\qquad  = 2(m-k+1) \bigg(L^k-\sum_{q_1+\ldots+q_L=k} { k \choose q_1, \ldots, q_L } \exp((\lambda_1 q_1 + \ldots + \lambda_L q_L)t)\\
    &\qquad \qquad \qquad \qquad \qquad \qquad \cdot \int_0^\infty\exp(2 (\lambda_1 q_1 + \ldots + \lambda_L q_L) s) \frac{1}{N} \exp(-s/N) ds\bigg)\\
    &\qquad = 2(m-k+1) \left(L^k-\sum_{q_1+\ldots+q_L=k} { k \choose q_1, \ldots, q_L } \frac{\exp((\lambda_1 q_1 + \ldots + \lambda_L q_L)t)}{1-2N(\lambda_1 q_1 + \ldots + \lambda_L q_L)}\right).
\end{align*}

The preceding calculation proves the following proposition.

\begin{proposition}\label{prop:general}
Let $S_1$ and $S_2$ be sequences of length $m$ 
with divergence time distributed according to the multispecies coalescent process with population size $N$ and species divergence time $t$. 
Suppose $S_1$ and $S_2$ are generated from an indel-free Markov model with transition rate matrix $Q$ and stationary distribution $\pi$, and let $\lambda_1, \ldots, \lambda_L$ be the eigenvalues of $Q$.
Let $X_1$ and $X_2$ be the resulting $k$-mer count vectors.
Then
\begin{multline*}
\E\left[ \sum_{W \in [L]^k} \frac{1}{\pi^W} (X^W_{1}-X^W_{2})^2 \right]  =  \\
2(m-k+1) \left(L^k-\sum_{q_1+\ldots+q_L=k} { k \choose q_1, \ldots, q_L } \frac{\exp((\lambda_1 q_1 + \ldots + \lambda_L q_L)t)}{1-2N(\lambda_1 q_1 + \ldots + \lambda_L q_L)}\right).
 \end{multline*}
\end{proposition}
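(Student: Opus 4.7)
The plan is to derive the formula by conditioning on the random sequence divergence time $T$ and reducing to Proposition \ref{prop:allmanetc}. First I would invoke the law of total expectation: given $T = \tau$, the sequences $S_1$ and $S_2$ are drawn from the indel-free Markov model with transition matrix $M = \exp(Q\tau)$, so Proposition \ref{prop:allmanetc} gives
\begin{equation*}
\E\!\left[\sum_{W \in [L]^k}\frac{1}{\pi^W}(X_1^W - X_2^W)^2 \,\Big|\, T = \tau\right] = 2(m-k+1)\bigl(L^k - (\tr M)^k\bigr).
\end{equation*}
Taking the outer expectation over $T$, the constant $2(m-k+1)L^k$ is unchanged, so the whole task reduces to evaluating $\E\bigl[(\tr\exp(QT))^k\bigr]$ with $T = t + 2s$ and $s \sim \mathrm{Exp}(1/N)$.

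Next I would rewrite $\tr\exp(Q\tau) = \sum_{l=1}^L e^{\lambda_l \tau}$ using the eigenvalues of $Q$ and expand the $k$-th power via the multinomial theorem. Each resulting term has the form $\binom{k}{q_1,\ldots,q_L}\exp(\Lambda \tau)$, where $\Lambda := \sum_l \lambda_l q_l$ and the tuple $(q_1,\ldots,q_L)$ ranges over nonnegative integers summing to $k$. Substituting $\tau = t + 2s$ splits the exponential into the deterministic factor $\exp(\Lambda t)$ and the random factor $\exp(2\Lambda s)$. Pulling the former outside the expectation and using the moment generating function of the exponential distribution, $\E[\exp(2\Lambda s)] = \int_0^\infty \exp(2\Lambda s)\,\tfrac{1}{N}e^{-s/N}\,ds = 1/(1 - 2N\Lambda)$, assembles the stated closed form once linearity of expectation is applied to the finite multinomial sum.

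The one delicate point is convergence of the MGF integral, which requires $\operatorname{Re}(\Lambda) < 1/(2N)$ for every multi-index. Since $Q$ is a rate matrix with stationary distribution $\pi$, one eigenvalue is $0$ and the rest have nonpositive real parts, so $\operatorname{Re}(\Lambda) \le 0$ for every admissible $(q_1,\ldots,q_L)$, and convergence is automatic. Beyond this sanity check, the argument is a fairly routine combination of Proposition \ref{prop:allmanetc}, the multinomial theorem, and the exponential MGF; no further structural input is required.
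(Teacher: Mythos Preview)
Your argument is correct and follows essentially the same route as the paper: condition on $T$, apply Proposition~\ref{prop:allmanetc}, expand $(\tr\exp(QT))^k$ multinomially, and evaluate the exponential MGF term by term. The convergence check you add for the MGF integral is a nice extra that the paper leaves implicit.
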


In practice, in order to obtain precise estimates of the expected $k$-mer distance of Proposition \ref{prop:general}, we would like to compute an average $k$-mer distance over many independently-generated sequences.
We have in mind the context in which a sequence (e.g. a genome) is divided into blocks corresponding to genes, and the expected $k$-mer distance is estimated by taking the average $k$-mer distance over all blocks.
Under our model, the sequence data from distinct loci are independent if the gene trees for those loci are independent, given the species tree.
We expect this to be the case for unlinked loci, but it may also hold for some linked loci if the population sizes, population structure, and recombination rates are sufficient to decouple inheritance \cite{wakeley2009coalescent, mcvean2002genealogical}.

In order to use Proposition \ref{prop:general} to derive statistically-consistent model-based corrections to the $k$-mer distances, the map from parameter space to the collection of all $k$-mer distances must be one-to-one.
For the remainder of this paper, we study this problem for the special case where $L=4$ and $Q$ is the Jukes-Cantor rate matrix
\begin{align*}
    Q = \left(\begin{array}{rrrr} -\mu & \mu/3 & \mu/3 & \mu/3\\
 \mu/3 &-\mu & \mu/3 & \mu/3 \\
\mu/3 & \mu/3 & -\mu & \mu/3 \\
\mu/3 & \mu/3 & \mu/3 & -\mu 
    \end{array}\right).
\end{align*}
The eigenvalues of $Q$ are $\lambda_1 = 0$ and $\lambda_2 = -4\mu/3$, with multiplicities $1$ and $3$, respectively.
The stationary distribution of the corresponding Markov process is $\pi^W = \frac{1}{4^k}$ for all $W \in [4]^k$.
So Proposition \ref{prop:general} reduces to the following:
\begin{corollary}
Let $S_1$ and $S_2$ be sequences of length $m$ 
with divergence time distributed according to the multispecies coalescent process with population size $N$ and species divergence time $t$. 
Suppose $S_1$ and $S_2$ are generated from an indel-free Jukes-Cantor mutation model.
Let $X_1$ and $X_2$ be the resulting $k$-mer count vectors.
Then
\[
    \E[ \| X_1 - X_2\|_2^2 ] \, \,  =  \, \, 
    2(m-k+1) \left(1- \frac{1}{4^k}\sum_{h=0}^k { k \choose h } 3^h \frac{\exp(-4h \mu  t/3)}{1+8h \mu N /3}\right).
\]
\label{cor:jukescantor}
\end{corollary}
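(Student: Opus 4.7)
The plan is to specialize Proposition \ref{prop:general} by substituting the spectral data of the Jukes-Cantor rate matrix, reindexing the resulting multinomial sum by the multiplicity of the nonzero eigenvalue, and then converting the $\pi$-weighted $\chi^2$-type distance into the plain squared Euclidean distance using the uniform stationary distribution.

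First I would record the spectral data that drives the simplification: for the Jukes-Cantor matrix $Q$ the eigenvalues are $\lambda_1=0$ (simple) and $\lambda_2=\lambda_3=\lambda_4=-4\mu/3$, as stated just above the corollary, and the stationary distribution is uniform on $[4]$, so $\pi^W=1/4^k$ for every $W\in[4]^k$. The uniformity of $\pi$ immediately gives the identity
\[
\sum_{W\in[4]^k}\frac{1}{\pi^W}(X_1^W-X_2^W)^2 \;=\; 4^k\,\|X_1-X_2\|_2^2,
\]
so it suffices to divide the right-hand side of Proposition \ref{prop:general} by $4^k$ and verify that the bracketed expression collapses to the one in the statement.

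Next I would reorganize the multinomial sum. Because $\lambda_1=0$ and $\lambda_2=\lambda_3=\lambda_4$ agree, the linear form $\lambda_1 q_1+\cdots+\lambda_4 q_4$ depends on $(q_1,\ldots,q_4)$ only through $h:=q_2+q_3+q_4=k-q_1$, and it equals $-\tfrac{4h\mu}{3}$. Grouping the terms in Proposition \ref{prop:general} by $h\in\{0,1,\ldots,k\}$, the multinomial identity
\[
\sum_{\substack{q_2+q_3+q_4=h}}\binom{k}{k-h,\,q_2,\,q_3,\,q_4} \;=\; \binom{k}{h}\sum_{q_2+q_3+q_4=h}\binom{h}{q_2,q_3,q_4} \;=\; \binom{k}{h}\,3^h
\]
accounts for all the weight on a given value of $h$. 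Substituting and using $-2N\cdot(-4h\mu/3)=8hN\mu/3$ in the denominator transforms the sum in Proposition \ref{prop:general} into
\[
\sum_{h=0}^{k}\binom{k}{h}3^h\,\frac{\exp(-4h\mu t/3)}{1+8hN\mu/3}.
\]

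Finally, I would combine the two steps: dividing the right-hand side of Proposition \ref{prop:general} by $4^k$ to convert the $\pi$-weighted distance into $\E[\|X_1-X_2\|_2^2]$, pulling the factor $1/4^k$ inside the parentheses, and substituting the reindexed sum above yields exactly the formula claimed in Corollary \ref{cor:jukescantor}. There is no serious obstacle; the only thing to get right is the multinomial collapse and the signs in the factor $1-2N(\lambda_1q_1+\cdots+\lambda_Lq_L)$, which flips to $1+8hN\mu/3$.
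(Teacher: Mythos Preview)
Your proposal is correct and follows exactly the approach the paper indicates: it substitutes the Jukes-Cantor eigenvalues $\lambda_1=0$, $\lambda_2=\lambda_3=\lambda_4=-4\mu/3$ and the uniform stationary distribution $\pi^W=1/4^k$ into Proposition \ref{prop:general}, then collapses the multinomial sum via the count $q_2+q_3+q_4=h$. The paper merely asserts that the proposition ``reduces to'' the corollary without writing out the multinomial identity or the $4^k$ rescaling, so you have in fact supplied the details it omits.
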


To analyze the dependence of the expected $k$-mer distance on the parameters $t,N,$ and $\mu$, we define a function which maps these parameters to the expected $k$-mer distance:

\begin{align*}
    f_k(t,N,\mu)&= 2(m-k+1)\left(1- \frac{1}{4^k}\sum_{h=0}^k { k \choose h } 3^h\frac{\exp(-4h\mu t/3)}{1+8h \mu N/3}\right).
\end{align*}

We note that
\begin{align}
    f_k(\alpha t, \alpha N, \mu/\alpha) = f_k(t,N,\mu) \label{eqn:invariance}
\end{align}
for any $\alpha > 0$.
This can be seen as a specific consequence of the fact that both the mutation process and the coalescence process are invariant under similar transformations of the parameters $t, \mu$ and $N$: These processes are unaffected by changing evolutionary times from $t$ to $\alpha t$, while simultaneously changing the coalescence rate from $1/N$ to $1 / \theta = 1/ N \alpha$ and the mutation from from $\mu$ to $\mu / \alpha$.
Thus, we have a fundamental inability to determine the values of the parameters $\mu$, $N$, and $t$ from data generated by these processes. 

We use a standard approach of adjusting time units to suppress the invariance described by Equation \ref{eqn:invariance}: We assume that time is measured in substitution units---that is, time units in which $\mu = 1$. Then $t$ represents time in units of the expected number of substitutions. In these time units, the rate of coalescence is $1/N \mu$. We let $\theta$ denote the quantity $N\mu$ in the denominator. With respect to the new time units, we obtain the expected $k$-mer distance as a function of $t$ and $\theta$:

\begin{align*}
    f_k(t,\theta)    &= 2(m-k+1)\left(1- \frac{1}{4^k}\sum_{h=0}^k { k \choose h } 3^h\frac{\exp(-4h t/3)}{1+8h\theta/3}\right).
\end{align*}

To facilitate an algebraic analysis of model identifiability, we reparametrize $f_k$ by introducing a quantity $x$, which we call the transformed species divergence time, defined by the following equation:
\begin{align*}
    x = \text{exp}(-4 t /3),
\end{align*}
where the divergence time $t$ is given in substitution units, as mentioned previously.
With this change of variables the expected $k$-mer distance of Corollary \ref{cor:jukescantor} can be written as a rational function of $x$ and $\theta$:
\begin{align}
    f_k(x,\theta)    &= 2(m-k+1)\left(1- \frac{1}{4^k}\sum_{h=0}^k { k \choose h } 3^h \frac{x^h}{1+8h \theta /3}\right). \label{eqn:rationalparam}
\end{align}

The parametrization of the expected $k$-mer distance in Equation \ref{eqn:rationalparam} depends continuously on two parameters, so these parameters are not identifiable from a single expected $k$-mer distance between one pair of species. We thus consider a situation in which we have $n$ species ($n>2$). For any two distinct species, labeled by $i,j \in [n]$, we consider $k$-mer vectors $X_i$ and $X_j$ generated by the coalescent and mutation processes. 
We let $t_{ij}$ denote the species divergence time, and we let $x_{ij}$ be the corresponding transformed divergence time. 
We suppose that the ${n \choose 2}$ $k$-mer distances between pairs of species are parametrized by a common set of parameters which, taken together, describe the evolutionary history of these species. 
The simplest assumption which provides a model that is potentially identifiable---having fewer than ${n \choose 2}$ independent parameters---is that the species divergence times $t_{ij}$ are distances among the leaves of a species tree, and all of the ancestral populations have the same size $N$ (and therefore the same value of $\theta$). 
Under these assumptions, we can parametrize all ${n \choose 2}$ pairs of expected $k$-mer distances by $2n-2$ parameters: $2n-3$ parameters giving the branch lengths of the species tree, and one $\theta$ parameter.

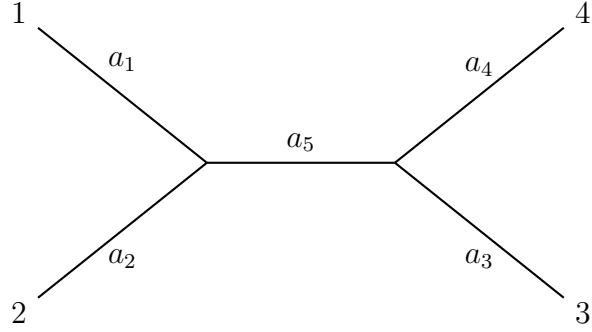
\begin{figure}
\begin{center}
\begin{tikzpicture}[
        thick,
        level/.style={level distance=2.5cm},
        level 2/.style={sibling distance=4cm},
        level 3/.style={sibling distance=2.5cm}
    ]
    \coordinate
        child[grow=left]{
            child {
                node {$1$}
                edge from parent node [above=5pt] {$a_1$}
            }
            child {
                node {$2$}
                edge from parent node [below=3pt] {$a_2$}
            }
            edge from parent node [above] {$a_5$}
        }
        child[grow=right, level distance=0pt] {
            child {
                node {$3$}
                edge from parent node [below=3pt] {$a_3$}
            }
            child {
                node {$4$}
                edge from parent node [above=3pt] {$a_4$}
            }
            edge from parent 
    };
\end{tikzpicture}
\end{center}
\caption{Example labeling of a $4$-taxon tree with transformed edge weight parameters $(a_e)_{e \in P(i,j)}$. The parameters are defined by $a_e = \exp(-4 w_e/3)$ where $w_e$ represents the evolutionary time along edge $e$ in substitution units. In the parametrization of our model, the transformed species divergence time for taxa $1$ and $3$, for example, is $x_{13} = a_1a_5a_3$.}
\label{parameters}
\end{figure}

We now describe this parametrization. Suppose $T$ is a species tree with leaves labeled by the taxa $[n]$, and let ${\{w_e \in \R_{\ge 0} : e \in E(T)\}}$ be an edge weighting of $T$, giving evolutionary times along edges of $T$.
If the divergence times $(t_{ij})_{1\le i <j \le n}$ are the distances between leaves $i$ and $j$ of $T$, then they are obtained by summing edge weights along paths in $T$:
\begin{align}
t_{ij} = \sum_{e \in P(i,j)} w_e, \label{eqn:branchlengths}
\end{align}
 where $P(i,j)$ is the set of edges of the unique path in $T$ connecting leaves $i$ and $j$.
 The transformed divergence times $(x_{ij})_{1\le i < j \le n}$ can then be expressed analogously: For each edge $e \in E(T)$, we define a transformed edge weight 
$a_e = \exp(-4 w_e/3).$
Then the transformed divergence times are obtained by multiplying branch lengths along corresponding paths in $T$:
\begin{align*}
x_{ij} = \prod_{e \in P(i,j)} a_e. \label{eqn:tranformedbranchlengths}
\end{align*}
 Since $w_e \ge 0$, we have $a_e \in (0,1]$. In the analyses which follow, we will parametrize the expected $k$-mer distance between species $i$ and $j$ using these transformed edge weights: $\E[ \| X_i - X_j\|_2^2 ] = f_k(\prod_{e \in P(i,j)} a_e,\theta)$.

We note that Proposition \ref{prop:general} gives a $k$-mer distance formula that depends on the total evolutionary distance between the species, and not on the individual branch lengths leading to their common ancestor.
Thus the $k$-mer distances of Proposition \ref{prop:general} do not depend on the choice of a root for the tree $T$.
The following observations clarify why this is true.
First, the distribution of the distances between the leaves of the gene tree does not depend on where the species tree is rooted.
This can be seen by observing that changing the root is equivalent to varying the quantity $\delta$ in Figure \ref{fig:twospeciescoalescent}, which does not affect any of the divergence times.
Second, the expected $k$-mer distances, for a fixed gene tree, as given by Proposition \ref{prop:allmanetc} also do not depend on the choice of a root for the gene tree.
In fact, by the stationarity of the Markov mutation process, the joint distribution of the sequences also do not depend on the root.
Together, these observations imply that the rooted tree is not identifiable from $k$-mer distances under our model. 
Thus we will consider only the unrooted tree when we formulate our identifiability results.

For a given $n$-leaf binary tree $T$ and for each $k \in \N$, we construct a rational
map whose inputs are the parameter $\theta$ and the transformed branch lengths
$a_e = \exp(-4\mu w_e/3)$, and which maps to the ${n \choose 2}$ expected $k$-mer distances
between pairs of species.  This map, denoted by $\phi_{k,T}$,
has the following form:
\[
\phi_{k,T}  :  \rr^{2n -2}  \rightarrow   \rr^{n(n-1)/2}
\]
\[
(a_e : e \in E(T),  \theta)  \mapsto    \left(f_k\left( \prod_{e \in P(i,j)} a_e,  \theta\right)\right)_{1 \leq i < j \leq n}.
\]

The parameters $a_e = \exp(-4 w_e/3)$ are 
restricted to lie in $(0,1]$, and the parameter $\theta = \mu N$ is positive. 
The parameter space of $\phi_{k,T}$ is thus $\Theta_T = (0,1]^{2n-3} \times \R_{> 0}$.
We note that the smallest $n$ for which the dimension of the image is at least the number of parameters is $n=4$. For $n=4$, $\phi_{k,T}: (0,1]^5 \times \R_{> 0} \to \R^6$, so that it is in principle possible for the parameters to be identifiable. We will see in the next section that the parameters are in fact locally identifiable in this case if $k > 1$.

\begin{example}
Figure \ref{parameters} shows the set of edge length parameters 
$(a_e)_{e \in P(i,j)}$ arising from a particular labeling of the $5$ edges of a $4$-taxon tree.  
Suppose that $k = 2$.  Then
$\phi_{k,T}$ takes the following form:
\[
\begin{pmatrix}
a_1 \\
a_2 \\
a_3 \\
a_4 \\
a_5 \\
\theta
\end{pmatrix}
\mapsto
\begin{pmatrix*}[l]
f_{2,T}(a_1a_2, \theta)  \\
f_{2,T}(a_1a_5a_3, \theta) \\
f_{2,T}(a_1a_5a_4, \theta) \\
f_{2,T}(a_2a_5a_3, \theta) \\
f_{2,T}(a_2a_5a_4, \theta) \\
f_{2,T}(a_3a_4, \theta)
\end{pmatrix*} 
=
\begin{pmatrix*}[l]
2(m-1)( 1 - \frac{1}{16} ( 1 + \frac{18a_1 a_2}{3 + 8 \theta} + \frac{27 a_1^2 a_2^2}{3 + 16 \theta}) \\
2(m-1)( 1 - \frac{1}{16} ( 1 + \frac{18a_1 a_5 a_3}{3 + 8 \theta} + \frac{27 a_1^2 a_5^2 a_3^2}{3 + 16 \theta}) \\
2(m-1)( 1 - \frac{1}{16} ( 1 + \frac{18a_1 a_5 a_4}{3 + 8 \theta} + \frac{27 a_1^2 a_5^2 a_4^2}{3 + 16 \theta}) \\
2(m-1)( 1 - \frac{1}{16} ( 1 + \frac{18a_2 a_5 a_3}{3 + 8 \theta} + \frac{27 a_2^2 a_5^2 a_3^2}{3 + 16 \theta}) \\
2(m-1)( 1 - \frac{1}{16} ( 1 + \frac{18a_2 a_5 a_4}{3 + 8 \theta} + \frac{27 a_2^2 a_5^2 a_4^2}{3 + 16 \theta}) \\
2(m-1)( 1 - \frac{1}{16} ( 1 + \frac{18a_3 a_4}{3 + 8 \theta} + \frac{27 a_3^2 a_4^2}{3 + 16 \theta}) \\
\end{pmatrix*}
\]
\end{example}


\section{Local Identifiability from k-mer Distances}\label{sec:local}

In this section, we prove our first main identifiability result. 
We  show that the map $\phi_{k,T}$ is 
generically finite-to-one if $n \ge 4$ and $k>1$.
In other words, both $\theta$ and the branch length parameters of the phylogenetic
tree are locally identifiable from  $k$-mer distances provided $k > 1$ and there are at
least $4$ species.  This is proven in Theorem \ref{thm:localidentn4} and
Corollary \ref{cor:localident}.
We will use this result in the next section to show that the (unrooted) tree parameter is identifiable.

\begin{definition}
    A map $\phi: S \to \R^d$ defined on an open set $S \subset \R^n$ is called \textit{generically finite-to-one} if there exists a proper algebraic subset $\tilde{S} \subset \R^n$ such that the fiber 
\begin{align*}
    \calF_{\phi}(s') = \{ s \in S \mid \phi(s) = \phi(s') \}
\end{align*}
    is finite for all $s' \in S \backslash \tilde{S}$.
\end{definition}

The following folklore result is the standard tool to
prove local identifiability.  A recent proof can be found in \cite{leung2016identifiability}:

\begin{lemma}
 If $\phi: S \to \R^d$ is a polynomial or rational map defined on an open set $S \subset \R^n$ then $\phi$ is generically finite-to-one on $S$ if and only if the Jacobian matrix of $\phi$ generically has full column rank.
\label{rank_condition}
\end{lemma}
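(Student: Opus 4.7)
The plan is to reduce the statement to two classical results from algebraic geometry: the fiber dimension theorem and the identification of the image dimension of a rational map with the generic rank of its Jacobian. Concretely, I would first complexify, viewing $\phi$ as a dominant rational map from $\C^n$ to the Zariski closure $V = \overline{\phi(S)} \subseteq \C^d$. Writing $r = \dim V$, the fiber dimension theorem produces a proper algebraic subset $W \subset V$ such that for every $v \in V \setminus W$, every irreducible component of $\phi^{-1}(v)$ has dimension $n - r$. Consequently, the generic fiber of $\phi$ is finite if and only if $r = n$, i.e., if and only if $\phi$ is dominant onto an $n$-dimensional image.

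Next I would invoke the standard differential criterion for the image dimension: at any smooth point $s$ where $\phi$ is defined, the tangent space to $V$ at $\phi(s)$ is the column span of the Jacobian $J\phi(s)$, so the generic rank of the $d \times n$ matrix $J\phi$ equals $r$. Hence generic fibers of $\phi$ are finite if and only if $J\phi$ has generic rank $n$, which is precisely the condition that $J\phi$ have full column rank generically.

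To transfer these conclusions back to the real open set $S$, I would use the fact that the locus where $J\phi$ drops rank, and the preimage of the exceptional set $W$, are each cut out by polynomial equations in $\C^n$. Since any nonempty Euclidean-open subset $S \subseteq \R^n$ is Zariski dense in $\C^n$, intersecting these loci with $S$ yields proper real algebraic subsets of $S$, so both ``generic'' statements make sense and remain equivalent over $\R$. I expect this real-to-complex transfer to be the main subtle step, since the fiber dimension theorem and the tangent-space description of the image are most naturally stated over an algebraically closed field; once the transfer is justified, the equivalence is immediate from the two ingredients above.
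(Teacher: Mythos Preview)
The paper does not actually prove this lemma: it presents the statement as a folklore result and simply cites \cite{leung2016identifiability} for a proof. So there is no ``paper's own proof'' to compare against, and your sketch is in fact supplying what the authors chose to omit.

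Your outline is the standard one and is essentially correct. One point deserves more care than you indicate. In the direction ``generic Jacobian rank $r<n$ $\Rightarrow$ not generically finite-to-one,'' your plan is to use that the generic complex fiber has dimension $n-r>0$ and then transfer to $\R$. But a positive-dimensional complex variety defined over $\R$ can have a finite (even a single) real point---think of $x^2+y^2=0$ in $\R^2$---so ``complex fiber infinite'' does not by itself force ``real fiber infinite.'' The clean fix is to bypass complexification for this direction and argue directly over $\R$: on the Zariski-open set where the Jacobian has its generic rank $r$, the constant rank theorem shows that each level set $\phi^{-1}(\phi(s))$ is, near $s$, a smooth real submanifold of dimension $n-r>0$, hence infinite. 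This makes the transfer issue disappear. For the other direction (full column rank $\Rightarrow$ generically finite-to-one), your fiber-dimension argument over $\C$ followed by intersecting the algebraic exceptional locus with $S$ works as you describe, since real fibers sit inside complex fibers and the bad locus, being proper Zariski-closed in $\C^n$, meets the Zariski-dense set $S$ in a proper real algebraic subset.
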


\begin{theorem}\label{thm:localidentn4}
For $k>1$ and $n=4$, the map $\phi_{k,T}: \R^6 \to \R^6$ is generically finite-to-one. In
particular, it is generically finite-to-one on $\Theta_{T} = (0,1]^5 \times \R_{> 0}$.
\label{local_identifiability}
\end{theorem}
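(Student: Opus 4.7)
The plan is to apply Lemma~\ref{rank_condition}. Since $\phi_{k,T}$ is a rational map from an open subset of $\R^6$ to $\R^6$, being generically finite-to-one is equivalent to its $6\times 6$ Jacobian having generic full column rank, i.e., to $\det J\phi_{k,T}$ not being identically zero. It therefore suffices to exhibit a single point of $\Theta_T$ at which the Jacobian determinant is nonzero.

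I would set this up through a chain-rule factorization. Writing $\phi_{k,T} = F \circ (\psi,\mathrm{id}_\theta)$ with $\psi \colon (a_e) \mapsto \bigl(\prod_{e \in P(i,j)} a_e\bigr)_{ij}$ and $F \colon (x_{ij},\theta) \mapsto \bigl(f_k(x_{ij},\theta)\bigr)_{ij}$, the first five columns of $J\phi_{k,T}$ (the $a_e$-derivatives) factor as $D_g \cdot J\psi$, where $D_g = \mathrm{diag}\bigl(\partial_x f_k(x_{ij},\theta)\bigr)_{ij}$, and the sixth column is $\mathbf{h} = \bigl(\partial_\theta f_k(x_{ij},\theta)\bigr)_{ij}$. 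In log coordinates $w_e = -\tfrac{3}{4}\log a_e$, the map $\psi$ linearizes to the $6\times 5$ path--edge incidence matrix of the binary $4$-leaf tree $T$, which is well known to have full column rank $5$. Since $\partial_x f_k$ is a nonzero rational function, $D_g$ is generically invertible, so the first five columns of $J\phi_{k,T}$ already have rank $5$ at generic points. Achieving rank $6$ amounts to showing that $\mathbf{h}$ does not lie in the column span of $D_g\, J\psi$ at some point.

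It is instructive to see what goes wrong for $k = 1$. A direct computation gives $f_1(x,\theta) = \tfrac{3m}{2}\bigl(1 - x/(1 + 8\theta/3)\bigr)$, which depends on $(x,\theta)$ only through the combination $x/(1 + 8\theta/3)$. Setting $c = 1/(1+8\theta/3)$, one verifies that the one-parameter family of perturbations $d\log a_e = \lambda$ for each leaf edge $e$, $d\log a_5 = 0$, $d\log c = -2\lambda$ leaves every $f_1(x_{ij},\theta)$ invariant, producing a genuine $1$-dimensional kernel for $J\phi_{1,T}$. For $k \ge 2$, however, $f_k$ contains the additional term $\binom{k}{2}\, 9\, x^2/(1+16\theta/3)$, and one checks that the two matching conditions $1 + 8\theta'/3 = \beta^{-1}(1+8\theta/3)$ and $1 + 16\theta'/3 = \beta^2(1+16\theta/3)$ force $\beta = 1$. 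Thus the gauge present at $k = 1$ is destroyed, and heuristically $\mathbf{h}$ should acquire a component transverse to the column span of $D_g J\psi$.

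The main obstacle is promoting this heuristic to a verification that $\det J\phi_{k,T} \not\equiv 0$. My approach would be to start with the base case $k = 2$, where the Jacobian is the explicit $6 \times 6$ matrix displayed in the example preceding the theorem. Specializing to the symmetric locus $a_1 = a_2 = a_3 = a_4 = a_5 = a$ collapses rows into two symmetry classes and reduces the determinant to a rational function of two variables $(a,\theta)$, whose nonvanishing at a convenient choice (for instance $a = 1/2$, $\theta = 1$) is a finite rational arithmetic check. For general $k > 1$, I would expand the Jacobian in powers of the $a_e$ around a small-$a$ specialization; at leading order only the $h \in \{0,1,2\}$ contributions survive, and a $2 \times 2$ minor built from one $a_e$-derivative and the $\theta$-derivative isolates the distinct denominators $1 + 8\theta/3$ and $1 + 16\theta/3$, reducing the nonvanishing to the same mechanism as in the $k=2$ computation. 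The most delicate step will be confirming that the higher-order tail in this expansion does not conspire to cancel the leading-order term.
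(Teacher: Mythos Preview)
Your chain-rule factorization $\phi_{k,T}=F\circ(\psi,\mathrm{id}_\theta)$ and the observation that the $k=1$ failure comes from a gauge in the ratio $x/(1+8\theta/3)$ match the paper's setup exactly. The gap is in the verification step. Your proposal to specialize to the symmetric locus $a_1=\dots=a_5=a$ and then check nonvanishing of the determinant at a convenient point cannot succeed: the Jacobian determinant is identically zero on that locus. Indeed, the tree automorphism $1\leftrightarrow 2$ permutes two pairs of output coordinates (an even permutation) and swaps the input columns $a_1\leftrightarrow a_2$ (an odd permutation), so $\det J\phi_{k,T}$ is antisymmetric in $(a_1,a_2)$ and is divisible by $(a_1-a_2)$; the same argument gives divisibility by $(a_3-a_4)$. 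Hence any specialization with $a_1=a_2$ or $a_3=a_4$ yields determinant zero, and your ``finite rational arithmetic check'' at $a=\tfrac12,\theta=1$ will return $0$.

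The paper avoids this by never specializing numerically. After writing out the determinant as an element of $\rr(\theta)[a_1,\dots,a_5]$, it isolates the lexicographically smallest monomial (namely $a_1 a_2^2 a_3 a_4^2 a_5^2$, which is visibly asymmetric in $a_1,a_2$ and in $a_3,a_4$) and computes its coefficient to be a nonzero rational function of $\theta$ with the factor $k^4(k-1)$, pinpointing exactly why $k=1$ degenerates. Your small-$a$ expansion idea is close in spirit, but the leading-order-in-$h$ truncation you describe is not enough: the contributions from $h\in\{1,2\}$ alone, if combined symmetrically, still cancel on the diagonal. What actually works is tracking a single asymmetric monomial through the expansion rather than reducing to a symmetric two-variable function.
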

    
\begin{proof}
        We show that the Jacobian matrix of $\phi_{k,T}$ has full column rank. We first write $\phi_{k,T}$ as a composite map and use the chain rule to write the Jacobian matrix as a product of matrices:
        $\phi_{k,T}$ can be expressed as the composition $\phi_{k,T} = F_{k,T} \circ \xi_T$ where
\begin{align*}
    F_{k,T} &= 
    (f_k\left( x_{12}, \theta \right), f_k\left( x_{13}, \theta \right), f_k\left( x_{14}, \theta \right), f_k\left( x_{23}, \theta \right), f_k\left( x_{24}, \theta \right), f_k\left( x_{34}, \theta \right)),\\
     \xi_T(a,\theta) &=
    \left(a_{1} a_{2}, a_{1} a_{3} a_{5}, a_{1} a_{4} a_{5}, a_{2} a_{3} a_{5}, a_{2} a_{4} a_{5}, a_{3} a_{4},\theta\right).
\end{align*}
(Here we have identified the input vector $(x_{12}, x_{13}, x_{14}, x_{23}, x_{24}, x_{34}, \theta)$ of $F_{k,T}$ with the output vector of $\xi_{T}$.)

The Jacobian matrix of the outer function $F_{k,T}$ is
\begin{equation}
    d_{(\x,\theta)} F_{k,T} =
    \left(\begin{array}{rrrrrrr}
            D_{12} & 0 & 0 & 0 & 0 & 0 & B_{12} \\
            0 & D_{13} & 0 & 0 & 0 & 0 & B_{13} \\
            0 & 0 & D_{14} & 0 & 0 & 0 & B_{14} \\
            0 & 0 & 0 & D_{23} & 0 & 0 & B_{23} \\
            0 & 0 & 0 & 0 & D_{24} & 0 & B_{24} \\
            0 & 0 & 0 & 0 & 0 & D_{34} & B_{34}
    \end{array}\right)
    \label{eqn:dF}
\end{equation}
where
    \begin{align*}
        D_{ij}(\x,\theta) &= \frac{\partial f_k}{\partial x_{ij}} (x_{ij},\theta) = -2(m-k+1)\frac{1}{4^k}\sum\limits_{h=1}^k {k \choose h} \frac{3^h h x_{ij}^{h-1}}{1+8\theta h/3}, \text{ and }\\
        B_{ij}(\x,\theta) &= \frac{\partial f_k}{\partial \theta} (x_{ij},\theta) = 2(m-k+1)\frac{1}{4^k}\sum\limits_{h=1}^k {k \choose h} \frac{3^h x_{ij}^{h} 8 h/3}{(1+8\theta h/3)^2}.
    \end{align*}
The Jacobian matrix of $\xi_T$ is:
\begin{equation}
    d_{(a,\theta)} \xi_T =
\left(\begin{array}{rrrrrr}
        a_{2} & a_{1} & 0 & 0 & 0 & 0 \\
        a_{3} a_{5} & 0 & a_{1} a_{5} & 0 & a_{1} a_{3} & 0 \\
        a_{4} a_{5} & 0 & 0 & a_{1} a_{5} & a_{1} a_{4} & 0 \\
        0 & a_{3} a_{5} & a_{2} a_{5} & 0 & a_{2} a_{3} & 0 \\
        0 & a_{4} a_{5} & 0 & a_{2} a_{5} & a_{2} a_{4} & 0 \\
        0 & 0 & a_{4} & a_{3} & 0 & 0\\
        0 & 0 & 0 & 0 & 0 & 1
\end{array}\right).
\label{eqn:dxi}
\end{equation}
If we use block notation to express matrices (\ref{eqn:dF}) and (\ref{eqn:dxi}) as $[D~ B]$ and $\left[\begin{array}{cc} E & 0\\ 0 & 1\end{array} \right]$ (where $D$ is $6 \times 6$, $B$ is $6 \times 1$, and $E$ is $6 \times 5$),
then the Jacobian determinant of $F_{k,T} \circ \xi_T: \R^6 \to \R^6$ is
\begin{align*}
    \det(d_{(a,\theta)} (F_{k,T} \circ \xi_T)) &= \det(d_{\xi_T(a,\theta)}F_{k,T} \cdot d_{(a,\theta)} \xi_T)\\
                                        &= \det\left( \left[D ~ B \right]\cdot \left[\begin{array}{cc} E & 0\\ 0 & 1\end{array} \right]\right)\\
                                        &= \det\left( \left[DE ~ B \right]\right)\\
                                        &= 4 D_{12}\bigg(a_{1} a_{2}^{2} a_{3} a_{4}^{2} a_{5}^{2} B_{13}D_{14}D_{23}D_{24} \\*
                                        & \qquad \qquad - a_{1} a_{2}^{2} a_{3}^{2} a_{4} a_{5}^{2} D_{13}B_{14}D_{23}D_{24} \\*
                                        & \qquad \qquad - a_{1}^{2} a_{2} a_{3} a_{4}^{2} a_{5}^{2} D_{13}D_{14}B_{23}D_{24} \\*
                                        & \qquad \qquad + a_{1}^{2} a_{2} a_{3}^{2} a_{4} a_{5}^{2} D_{13}D_{14}D_{23}B_{24} \bigg)D_{34}.
\end{align*}

To prove that $  \det(d_{(a,\theta)} (F_{k,T} \circ \xi_T)) $ is
not the zero function, it suffices to collect terms and show that there is
a nonzero term in this expression.  Here we think of $ \det(d_{(a,\theta)} (F_{k,T} \circ \xi_T)) $ as an element of the ring $\rr(\theta)[a_1,a_2,a_3,a_4,a_5]$.
Ignoring the common factor of $4 D_{12}D_{34} a_1 a_2a_3a_4a_5^2$, we need to show that
\[
a_{2}  a_{4} B_{13}D_{14}D_{23}D_{24} - 
 a_{2} a_{3}  D_{13}B_{14}D_{23}D_{24} -
 a_{1} a_{4} D_{13}D_{14}B_{23}D_{24} + 
 a_{1}a_{3} D_{13}D_{14}D_{23}B_{24}
\]
is not the zero polynomial.  Expanding this as an element of
$\rr(\theta)[a_1,a_2,a_3,a_4,a_5]$, and ignoring the common factor of $\left(2(m-k+1)\frac{1}{4^k}\right)^4$ we see that the lowest order term
in the lexicographic order with $a_1 > a_2 > a_3 > a_4 > a_5$ when $k \geq 2$ is
\[
    -a_1 a_2^2 a_3 a_4^2 a_5^2 \frac{648 k^4(k-1)}{(1 + 8\theta/3)^3(1 + 16 \theta/3)^2}.
\]
The coefficient of this monomial is a nonzero function of $\theta$, so for $k>1$ the Jacobian determinant of 
$F_{k,T} \circ \xi_T$ is nonzero generically,
and thus $d_{(a,\theta)} F_{k,T} \circ \xi_T$ has full column rank.
By Lemma \ref{rank_condition} the map $\phi_{k,T} = F_{k,T} \circ \xi_T$ is 
generically finite-to-one on $\R^6$. Therefore $\phi_{k,T}$ is generically 
finite-to-one on $\Theta_{T} = (0,1]^5 \times \R_{> 0}.$
\end{proof}

The preceding theorem shows that $\phi_{k,T}$ is generically finite-to-one for $n = 4$
leaf trees.
This easily extends to arbitrary trees with $n \geq 4$ leaves.

\begin{corollary}\label{cor:localident}
For $k>1$ and $n\geq 4$, the map $\phi_{k,T}: \R^{2n-2} \to \R^{n(n-1)/2}$ is generically
 finite-to-one. In
particular, it is generically finite-to-one on $\Theta_{T} = (0,1]^{2n-3} \times \R_{> 0}$.
\end{corollary}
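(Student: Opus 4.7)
The plan is to bootstrap Theorem~\ref{thm:localidentn4}: first use any quartet of leaves to pin $\theta$ down to finitely many values, then invert the scalar function $f_k(\cdot,\theta)$ to recover every transformed pairwise divergence time $x_{ij}$, and finally extract the individual $a_e$'s from the $x_{ij}$'s by classical tree-metric reconstruction.

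Concretely, fix any four leaves $\{i_1,i_2,i_3,i_4\}\subset[n]$ and let $Q$ denote the $4$-leaf topology they span inside $T$. Each of the five transformed branch parameters $a'_1,\ldots,a'_5$ of $Q$ is a monomial in the $a_e$'s of $T$, obtained as the product of $a_e$'s along the corresponding path in $T$; crucially, these five paths are pairwise edge-disjoint in $T$. The six coordinates of $\phi_{k,T}$ indexed by pairs inside $\{i_1,i_2,i_3,i_4\}$ are therefore exactly $\phi_{k,Q}(a'_1,\ldots,a'_5,\theta)$, and Theorem~\ref{thm:localidentn4} says that this quartet map is generically finite-to-one in its six arguments; in particular $\theta$ is determined, from those six coordinates alone, up to finitely many candidates. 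Next, the derivative $D_{ij}$ computed inside the proof of Theorem~\ref{thm:localidentn4} is strictly negative for every $\theta>0$ and every $k\geq 1$, so $x\mapsto f_k(x,\theta)$ is strictly decreasing on $(0,1]$ and each $x_{ij}$ is uniquely recovered from the scalar equation $f_k(x_{ij},\theta)=\E[\|X_i-X_j\|_2^2]$ once $\theta$ has been fixed. Finally, taking $-\tfrac{3}{4}\log$ of $x_{ij}=\prod_{e\in P(i,j)} a_e$ converts the collection of transformed distances into the additive tree-metric identity $t_{ij}=\sum_{e\in P(i,j)} w_e$, and because $T$ is a fixed binary tree the edge weights $w_e$ (and hence the $a_e$'s) are uniquely reconstructible from the metric $(t_{ij})$ via the four-point condition.

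Combining the three steps bounds each fiber of $\phi_{k,T}$ over a generic image point by the corresponding quartet fiber from Theorem~\ref{thm:localidentn4}, which is finite. The main obstacle will be bookkeeping the word \emph{generic}: the algebraic exceptional subset of $(0,1]^5\times\R_{>0}$ on which the quartet map has infinite fibers must pull back to a \emph{proper} algebraic subset of $\Theta_T$ under the monomial reparametrization $(a_e,\theta)\mapsto(a'_1,\ldots,a'_5,\theta)$. This should follow from the edge-disjointness observation above: the five monomials $a'_1,\ldots,a'_5$ depend on pairwise disjoint subsets of the $a_e$'s, so the reparametrization is a dominant polynomial map and the preimage of any proper subvariety is again proper. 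This will yield generic finite-to-oneness of $\phi_{k,T}$ on all of $\R^{2n-2}$, and in particular on $\Theta_T=(0,1]^{2n-3}\times\R_{>0}$.
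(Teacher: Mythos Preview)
Your argument is correct and genuinely different from the paper's.  The paper proceeds by a short induction on $n$: pick a cherry leaf $i$ and a second leaf $j$ not in that cherry, observe that the full edge-weighting of $T$ (together with $\theta$) is determined by its restrictions to the two $(n-1)$-leaf subtrees $T\setminus i$ and $T\setminus j$, and invoke the inductive hypothesis that $\phi_{k,T\setminus i}$ and $\phi_{k,T\setminus j}$ are already generically finite-to-one.  No use is made of the monotonicity of $f_k(\cdot,\theta)$ or of tree-metric reconstruction.

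Your route---fix one quartet to trap $\theta$ in a finite set, invert the scalar map $x\mapsto f_k(x,\theta)$ to recover every $x_{ij}$, then pass to the additive metric $t_{ij}=-\tfrac34\log x_{ij}$ and read off the $a_e$---has the advantage of giving an explicit injection of the fiber of $\phi_{k,T}$ into the fiber of the single quartet map $\phi_{k,Q}$, so you get a concrete bound on fiber size rather than a recursive one.  It also foreshadows exactly the reconstruction step the paper later uses in Section~\ref{sec:kand1}.  The paper's induction, by contrast, is shorter to write and avoids any analytic input (monotonicity) or explicit metric inversion.

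One small point to tidy: your monotonicity step pins down $x_{ij}\in(0,1]$ uniquely only when $\theta>0$, so as written your fiber bound is for the restricted domain $\Theta_T$.  To get the statement on all of $\R^{2n-2}$ you should either note that $f_k(\cdot,\theta)$ is a degree-$k$ polynomial in $x$ (so at most $k$ real solutions for each $x_{ij}$, and the monomial system $x_{ij}=\prod_{e\in P(i,j)}a_e$ has finite fibers over~$\R$), or simply invoke Lemma~\ref{rank_condition}: finite fibers on the Euclidean-open set $\Theta_T$ force the Jacobian to have full column rank there, hence generically on $\R^{2n-2}$.
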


\begin{proof}
    We argue by induction. Suppose that $\phi_{k,S}$ is generically finite-to-one for every $n-1$-leaf subtree $S$ of $T$. Let $i$ be a leaf from a cherry of $T$, and let $j$ be a leaf which does not form a cherry with $i$. 
    Then every metric of $T$ in the fiber $\calF_{\phi_{k,T}}(a,\theta) = \phi_{k,T}^{-1}(\phi_{k,T}(a,\theta))$ is determined by its induced metrics on the subtrees $T\backslash i$ and $T \backslash j$. (Here we use $T \backslash i$ to denote the binary phylogenetic tree obtained by deleting leaf vertex $i$, and then suppressing the remaining vertex of the edge incident to $i$.)
    Since $\phi_{k,T\backslash i}$ and $\phi_{k,T \backslash j}$ are finite-to-one generically, $\phi_{k,T}$ is finite-to-one generically.
    The result follows by induction, starting from the $4$-leaf subtrees $Q$ of $T$, for which $\phi_{k,Q}$ is finite-to-one generically by Theorem \ref{local_identifiability}.
\end{proof}

Note that for a $4$ leaf tree with $k = 1$ the Jacobian determinant is just zero, and so
the model is not locally identifiable in this case.  In fact, for monomers, and any tree,
the model $\phi_{1,T}$ is never identifiable.  This can be seen from the special
form of the function $f_{1}$ which is
\[
f_1(x, \theta) = 2m\left(1 - \frac{1}{4} \left(1 + \frac{9x}{3 + 8 \theta} \right)\right).
\]
From this we see that for a given vector $(a_e: e \in E(T),  \theta)$ and a given $\theta_0$, if we can replace
all leaf edge parameters with $b_e = \frac{\sqrt{3+8\theta_0}}{\sqrt{3 + 8 \theta}}a_e$, and make $b_e = a_e$ for internal
edges, then 
\[
\phi_{1,T}(a, \theta) = \phi_{1,T} (b, \theta_0).
\]
In particular, this shows that the dimension of the image of $\phi_{1,T}$ is strictly less than
$2n-2$, so the parameters could not be locally identifiable.

 The following example shows that $\phi_{k,T}$ it is not one-to-one generically on $\Theta_{T}$ for $k=2$ and $n=4$.

\begin{example}
This example shows that for
$k = 2$, $n = 4$ the numerical parameters are not generically identifiability. 
Let $T = 12|34$. Then $\phi_{2,T}(a_1,\theta_1) = \phi_{2,T}(a_2,\theta_2)$ for the following values of $(a_1,\theta_1)$ and $(a_2,\theta_2)$:
\begin{align*}
    (a_1,\theta_1) &= (a_{1,1},a_{1,2},a_{1,3},a_{1,4},a_{1,5},\theta_1) \approx 
    (0.7199,0.6687,0.9623,0.9950,0.9907,0.9146)\\
    (a_2,\theta_2) &= (a_{2,1},a_{2,2},a_{2,3},a_{2,4},a_{2,5},\theta_2) \approx 
     (0.5624,0.5202,0.7642,0.7916,0.9902,0.3400)
\end{align*}

By direct computation, the Jacobian has full rank at $(a_1,\theta_1)$ and $(a_2,\theta_2)$.
By the Implicit Function Theorem, in a union of small open neighborhoods of $(a_1,\theta_1)$ and $(a_2,\theta_2)$,
$\phi_{2,T}$ will be $2$-to-$1$.  This shows that the numerical parameters are not
generically identifiable in this case.
\end{example}


\section{Identifiability of the Tree Topology}\label{sec:treetopo}

In this section, we provide a proof of the generic identifiability of the unrooted tree
topology for Jukes-Cantor $k$-mers for all binary trees with $n\geq 5$ leaves.
The proof is based on applying some ideas from algebraic statistics together with tools from
combinatorial phylogenetics.

To discuss identifiability of the tree parameter, we need to work 
explicitly with the image of $\phi_{k,T}$, that is,   the set of pairwise 
$k$-mer distances that are compatible with our model assumption and the underlying tree
topology $T$.  We denote this set $\calm_{k,T}$.  

\begin{definition}
The tree parameter of the multispecies coalescent model is \emph{generically identifiable}
from $k$-mer distances on trees with $n$ leaves if for all (unrooted) binary trees $T, T'$ on $n$ leaves with
$T \neq T'$,  we have 
\[
\dim( \calm_{k,T} \cap \calm_{k,T'})  <  \min( \dim( \calm_{k,T}), \dim( \calm_{k,T'})).
\]
\end{definition}

An interpretation of the definition is that if we sample a point $p \in \cup_T  \calm_{k,T}$,
then with probability one, there is a unique binary tree $T$ such that $p \in \calm_{k,T}$.  
So generic identifiability of the tree topology means that with probability one, if we have a point
$p$ that came from some tree and some choice of continuous parameters, we can tell which
tree it came from.
One approach to prove the dimension inequality is by using the vanishing
ideals of the sets $\calm_{k,T}$.

\begin{definition}
Let $S \subseteq \rr^d$ and let $\rr[p_1, \ldots, p_d]$ be the polynomial ring in 
indeterminates $p_1, \ldots, p_d$ with real coefficients.  The \emph{vanishing ideal} of $S$
is the set 
\[
I(S)  =  \{ f \in \rr[p_1, \ldots, p_d] :  f(a) = 0  \mbox{ for all  } a \in S  \}.
\]
\end{definition}

For example, 
if $S = \{ (t,t^2) \in \rr^2 :  t \in \rr \}$, then $I(S) =  \langle p_1^2 - p_2 \rangle \subseteq
\rr[p_1, p_2]$.  The polynomial $p_1^2 - p_2$ is called a generator of the ideal $I(S)$.
We refer the reader to \cite{Cox2015} for more background on the necessary algebra,
and \cite{Allman2011} for an example of how this approach is used in studying the identifiability
under the coalescent model in other contexts.

In our case, we will look at the vanishing ideals of the sets $\calm_{k,T} \subseteq \rr^{n(n-1)/2}$
and so the appropriate polynomial ring is $\rr[p_{ij} :  1 \leq i < j \leq n]$.
The following proposition is the key general result about vanishing ideals
that is often used to prove identifiability.

\begin{proposition}\label{prop:identwithideal}
Suppose that $S_1$ and $S_2 \subseteq \rr^d$ are two parametrized sets such that there are nonzero
polynomials $f_1 \in I(S_1) \setminus I(S_2)$ and $f_2 \in I(S_2) \setminus I(S_1)$.
Then 
\[\dim (S_1 \cap S_2)  <  \min(\dim S_1, \dim S_2). \]
\end{proposition}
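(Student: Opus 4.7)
The plan is to pass to Zariski closures and exploit the assumption that each $S_i$ is parametrized, i.e.\ the image of an irreducible parameter space under a polynomial or rational map. Write $\overline{S_i}$ for the Zariski closure of $S_i$ in $\rr^d$, and recall the standard facts $I(S_i) = I(\overline{S_i})$ and $\dim S_i = \dim \overline{S_i}$. Because the image of an irreducible set under a polynomial or rational map has irreducible Zariski closure, both $\overline{S_1}$ and $\overline{S_2}$ are irreducible algebraic varieties. This irreducibility is what the rest of the argument hinges on.

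First I would establish the inclusion $\overline{S_1 \cap S_2} \subseteq \overline{S_1} \cap V(f_2)$. The containment $\overline{S_1 \cap S_2} \subseteq \overline{S_1} \cap \overline{S_2}$ follows at once from $S_1 \cap S_2 \subseteq \overline{S_1} \cap \overline{S_2}$ together with closedness of the right-hand side. Since $f_2 \in I(S_2) = I(\overline{S_2})$, the polynomial $f_2$ vanishes on $\overline{S_2}$, hence on $\overline{S_1} \cap \overline{S_2}$, giving the claimed inclusion. An identical argument, swapping the roles of the two sets and replacing $f_2$ by $f_1$, yields $\overline{S_1 \cap S_2} \subseteq \overline{S_2} \cap V(f_1)$.

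Next, the hypothesis $f_2 \notin I(S_1) = I(\overline{S_1})$ says that $f_2$ does not vanish identically on the irreducible variety $\overline{S_1}$. Thus $\overline{S_1} \cap V(f_2)$ is a proper closed subvariety of $\overline{S_1}$, so by the standard fact that a proper closed subvariety of an irreducible variety has strictly smaller dimension,
\[
\dim(S_1 \cap S_2) = \dim \overline{S_1 \cap S_2} \leq \dim\bigl(\overline{S_1} \cap V(f_2)\bigr) < \dim \overline{S_1} = \dim S_1.
\]
The symmetric argument using $f_1$ gives $\dim(S_1 \cap S_2) < \dim S_2$, and taking the two inequalities together yields the desired bound.

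The main subtlety, and the only nontrivial ingredient, is the irreducibility of each $\overline{S_i}$, which is what allows the dimension drop in the final step. This is precisely why the word ``parametrized'' appears in the hypothesis: the sets $\calm_{k,T}$ to which the proposition is applied in the next section are the images of the irreducible parameter spaces $\Theta_T$ under the rational maps $\phi_{k,T}$, so their Zariski closures are automatically irreducible. Everything else is a direct application of elementary dimension theory for algebraic varieties, for which \cite{Cox2015} is a convenient reference.
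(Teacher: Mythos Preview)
Your proof is correct and is the standard argument for this folklore result. Note, however, that the paper does not actually supply a proof of this proposition: it is stated as ``the key general result about vanishing ideals that is often used to prove identifiability'' and left unproven, with \cite{Cox2015} cited for background and \cite{Allman2011} cited as an example of the technique in action. So there is no paper proof to compare against; your write-up would serve perfectly well as the omitted justification, and your closing remark explaining why the word ``parametrized'' matters is exactly the point.
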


When the trees have $4$ leaves, the model has $6$ parameters and there are $\binom{4}{2}$
pairwise distances.  Since the numerical parameters are generically locally identifiable in this
case when $k \geq 2$, the vanishing ideals will all be the zero ideals.  This suggests
that the tree parameters might not be generically identifiable when $n = 4$ and $k \geq 2$,
though we have not been able to find specific parameter values realizing this.
On the other hand, when $k =1$ and $n = 4$, the numerical parameters are not identifiable
and the tree parameters \emph{are} identifiable.

\begin{proposition}\label{prop:4leafinvariant}
Let $k = 1$ and $n = 4$, and $T$ and $4$-leaf binary tree.  
Then the vanishing ideal of $I(\calm_{1,T})$ is a principal ideal, and the generator
can be used to distinguish between the different trees.
\end{proposition}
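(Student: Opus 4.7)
The plan is to simplify by an affine change of variables and then linearize via logarithms, exploiting the fact that for $k=1$ the function $f_1$ is affine in $x$. Explicitly, since
\[
f_1(x,\theta) = \frac{3m}{2} - \frac{9m}{2(3+8\theta)}\,x,
\]
the substitution $y_{ij} := \frac{3m}{2} - p_{ij}$ is a nonsingular affine change of coordinates on $\R^6$, preserving dimensions, irreducibility, and Zariski closures, so it suffices to analyze the vanishing ideal in the $y$-coordinates. After this change, $\calm_{1,T}$ is the image of the monomial map
\[
(\gamma, a_1, \ldots, a_5) \;\mapsto\; \Bigl(\gamma \prod_{e \in P(i,j)} a_e\Bigr)_{1 \le i<j \le 4}, \qquad \gamma := \tfrac{9m}{2(3+8\theta)} > 0.
\]

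Next I would take logarithms to reduce the problem to linear algebra. The log-parametrization is a linear map $\R^6 \to \R^6$ whose matrix has a column for $\gamma$ (all ones) and one column per edge $e \in E(T)$, with a $1$ in row $ij$ exactly when $e \in P(i,j)$. A direct $6\times 6$ rank computation gives rank $5$, with a one-dimensional cokernel spanned by the vector encoding the multiplicative four-point condition for $T$; for $T = 12|34$ this reads $y_{13}y_{24} = y_{14}y_{23}$, and analogously for the other two trees. Translating back to $p$-coordinates yields the candidate quadratic generator
\[
P_T = \bigl(\tfrac{3m}{2}-p_{13}\bigr)\bigl(\tfrac{3m}{2}-p_{24}\bigr) - \bigl(\tfrac{3m}{2}-p_{14}\bigr)\bigl(\tfrac{3m}{2}-p_{23}\bigr) \in I(\calm_{1,T}),
\]
and the two analogous $2\times 2$-determinantal polynomials for $T = 13|24$ and $T = 14|23$.

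Principality follows from soft commutative algebra. The rank-$5$ computation shows the Zariski closure of $\calm_{1,T}$ is an irreducible $5$-dimensional subvariety of $\R^6$ (irreducibility because $\calm_{1,T}$ is parametrized by a connected open semi-algebraic set through polynomials in a rational function of $\theta$), so $I(\calm_{1,T})$ is a prime ideal of height $1$ in the UFD $\R[p_{ij}: 1\le i<j\le 4]$ and is therefore principal. Because $P_T$ is, up to the affine substitution $y_{ij} = \tfrac{3m}{2}-p_{ij}$, the $2\times 2$ determinant of a matrix of independent indeterminates and hence irreducible, and it is a nonzero element of this prime ideal, it must be the generator. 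To verify the three generators distinguish the three trees, I would substitute the monomial parametrization of $T'$ into $P_T$: for $T' \neq T$ the result reduces to a nonzero monomial multiple of $1 - (a'_5)^2$ in the internal-edge parameter $a'_5$ of $T'$, so $P_T$ fails to vanish generically on $\calm_{1,T'}$. Equivalently, among the three products $y_{12}y_{34}$, $y_{13}y_{24}$, $y_{14}y_{23}$, the two equal ones pick out the non-cherry pair of $T$.

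The main technical obstacle I foresee is the rank-$5$ verification for the $6\times 6$ incidence-plus-$\gamma$ matrix: one has to check not only that the classical four-point relation holds, but also that the extra $\gamma$-column does not introduce any spurious additional relation (which would give a lower-dimensional image and require more generators). The matrix is small enough to handle by hand, but this is the step where something could go wrong; everything afterward is formal UFD/dimension reasoning.
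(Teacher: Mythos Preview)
Your proof is correct and arrives at exactly the same generator as the paper, namely
\[
\Bigl(p_{13}-\tfrac{3m}{2}\Bigr)\Bigl(p_{24}-\tfrac{3m}{2}\Bigr)-\Bigl(p_{14}-\tfrac{3m}{2}\Bigr)\Bigl(p_{23}-\tfrac{3m}{2}\Bigr).
\]
The paper's argument is much terser: it simply writes out the six parametrizing expressions for $T=12|34$, asserts that they ``satisfy the single relation'' above, and observes that the indeterminates appearing in this relation are different from those appearing in the analogous relations for the other two quartet topologies. You take a more structural route---the affine change $y_{ij}=\tfrac{3m}{2}-p_{ij}$ to expose the monomial (toric) parametrization, a logarithmic rank computation to establish that the image has dimension exactly~$5$, and then the height-one-prime-in-a-UFD argument together with irreducibility of the $2\times 2$ determinant to conclude principality. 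In effect you are supplying the justification for the word ``single'' that the paper leaves implicit. Your explicit substitution showing that $P_T$ restricts to a nonzero multiple of $1-(a_5')^2$ on $\calm_{1,T'}$ is likewise more informative than the paper's one-line remark about which indeterminates appear. The worry you flag about the extra $\gamma$-column is harmless: the only column dependency is $2(\log\gamma)=\sum_{i=1}^{4}\log a_i$, so the rank is indeed~$5$ and no additional generator appears.
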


\begin{proof}
Fix the specific tree $T = 12|34$.  The parametrization in the $k = 1$ case has the following form:
\begin{align*}
p_{12} & =   2m\left( 1-\frac{1}{4} ( 1 + a_1 a_2 \frac{9}{3 + 8 \theta})\right) \\
p_{13} & = 2m\left(1- \frac{1}{4} ( 1 + a_1 a_3 a_5 \frac{9}{3 + 8 \theta})\right) \\
p_{14} & = 2m\left(1 - \frac{1}{4} ( 1 + a_1 a_4 a_5 \frac{9}{3 + 8 \theta})\right) \\
p_{23} & = 2m\left(1 - \frac{1}{4} ( 1 + a_2 a_3 a_5 \frac{9}{3 + 8 \theta})\right) \\
p_{24} & = 2m\left(1 - \frac{1}{4} ( 1 + a_2 a_4 a_5 \frac{9}{3 + 8 \theta})\right) \\
p_{34} & =   2m\left( 1-\frac{1}{4} ( 1 + a_3 a_4 \frac{9}{3 + 8 \theta})\right). \\
\end{align*}
These expressions satisfy the single relation:
\begin{equation*}
\left(p_{13} - \frac{3m}{2}\right)\left( p_{24} - \frac{3m}{2}\right) - \left(p_{14} - \frac{3m}{2}\right)\left( p_{23} - \frac{3m}{2}\right). 
\end{equation*}
As the indeterminates that appear in this equation are different from the ones
that occur in the analogous equation for one of the other trees on $4$ leaves, this
shows that we can apply Proposition \ref{prop:identwithideal} to deduce that
the tree parameters are generically identifiable in this case.
\end{proof}

Now we show the analogous result for $n = 5$ leaf trees and for $k \geq 2$.

\begin{proposition}\label{prop:5leafinvariant}
Let $k \geq 2$ and consider the five leaf tree $T$ with nontrivial splits $12|345$ and $123|45$.
Then none of the generators of the vanishing ideal $I( \calm_{k,T})$ involve the indeterminates $p_{12}$ and
$p_{45}$.  Furthermore, there are generators of $I( \calm_{k,T})$ that involve 
the other indeterminates of $\rr[p]$ in a nontrivial way.
\end{proposition}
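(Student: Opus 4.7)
The plan is to establish both claims by exploiting two one-parameter rescalings of the edge parameters that decouple $p_{12}$ and $p_{45}$ from the remaining $k$-mer distances. Label the pendant edge incident to leaf $i$ as $e_i$, the internal edge realizing split $12|345$ as $e_6$, and the internal edge realizing split $123|45$ as $e_7$, so that leaf $3$ is attached at the vertex common to $e_6$ and $e_7$. I would consider the transformations
\[
T_\alpha: (a_1, a_2, a_6) \mapsto (\alpha a_1, \alpha a_2, a_6/\alpha), \qquad T'_\beta: (a_4, a_5, a_7) \mapsto (\beta a_4, \beta a_5, a_7/\beta),
\]
each fixing the remaining parameters and $\theta$. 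A direct check using $x_{ij} = \prod_{e \in P(i,j)} a_e$ shows that $T_\alpha$ scales $x_{12}$ by $\alpha^2$ while leaving every other $x_{ij}$ unchanged: on every path exiting the cherry $\{1,2\}$, the factor $\alpha$ contributed by $a_1$ or $a_2$ cancels the factor $1/\alpha$ contributed by $a_6$, and paths among $\{3,4,5\}$ involve none of $a_1, a_2, a_6$. An analogous argument handles $T'_\beta$ and $x_{45}$. Since $f_k(\cdot,\theta)$ is a nonconstant polynomial of degree $k$ in its first argument, the image pair $(p_{12}, p_{45})$ as $(\alpha,\beta)$ varies is Zariski-dense in $\R^2$ while the other eight $p_{ij}$'s remain fixed.

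Let $R = \R[p_{ij} : \{i,j\} \neq \{1,2\}, \{4,5\}]$, and let $\calm_{k,T}''$ be the image of $\calm_{k,T}$ under the projection to the coordinates of $R$. The preceding calculation shows that the fiber of $\calm_{k,T}\to \calm_{k,T}''$ over a generic point is Zariski-dense in the corresponding $\R^2$-fiber in the $(p_{12},p_{45})$-directions. Given $f \in I(\calm_{k,T})$, I would expand $f = \sum_{a,b} c_{a,b}\, p_{12}^a p_{45}^b$ with $c_{a,b} \in R$. Substituting a generic point of $\calm_{k,T}''$ yields a polynomial in $p_{12}, p_{45}$ vanishing on a Zariski-dense subset of $\R^2$, hence identically zero, which forces each $c_{a,b}$ to vanish on $\calm_{k,T}''$. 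Thus every $c_{a,b}$ lies in $I_R := R \cap I(\calm_{k,T})$, so $I(\calm_{k,T}) = I_R \cdot \R[p_{ij}]$, proving the first claim.

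For the second claim, a dimension count suffices. By Corollary \ref{cor:localident}, for $k > 1$ and $n = 5$ the map $\phi_{k,T}: \R^8 \to \R^{10}$ is generically finite-to-one, so $\dim \calm_{k,T} = 8$ and $I(\calm_{k,T})$ has height $2$ in $\R[p_{ij}]$. Combined with the first claim, $\R[p_{ij}]/I(\calm_{k,T}) \cong (R/I_R)[p_{12},p_{45}]$, and additivity of Krull dimension gives $\dim(R/I_R) = 6$, so $I_R$ is a proper, nontrivial, height-$2$ ideal of $R$. Its generators are then nonconstant polynomials in the eight indeterminates other than $p_{12}$ and $p_{45}$, yielding the second claim. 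The main obstacle I anticipate is the combinatorial bookkeeping behind the invariance of the eight $x_{ij}$'s under $T_\alpha$ and $T'_\beta$; this relies sharply on both cherries $\{1,2\}$ and $\{4,5\}$ being separated from leaf $3$ by the two internal edges, and analogous rescalings would fail to exist on a tree where one of these cherries is absent. Exhibiting an explicit generator of $I_R$ in closed form is a separate elimination-theoretic task that the abstract dimension argument sidesteps.
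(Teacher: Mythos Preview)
Your argument for the first claim is correct and is essentially the same geometric observation the paper makes, only phrased differently: the paper cites the known binomial generators of the toric ideal in the $x_{ij}$ and notes that none involve $x_{12}$ or $x_{45}$, whereas you exhibit the rescalings $T_\alpha, T'_\beta$ directly. Both amount to the fact that $x_{12}$ and $x_{45}$ can be moved freely within $\calm_T$, and both then push this through $f_k(\cdot,\theta)$ in the same way. Your formulation has the advantage of being self-contained, while the paper's has the advantage of naming the ambient toric structure.

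For the second claim, however, your dimension count only shows that $I_R$ is a nonzero height-$2$ prime of $R$, i.e.\ that \emph{some} nontrivial relation exists among the eight remaining $p_{ij}$. The paper's statement, and more importantly its use in the subsequent identifiability theorem, requires more: one needs that \emph{each} of the eight indeterminates other than $p_{12},p_{45}$ actually appears in some generator, so that $\{p_{12},p_{45}\}$ is precisely the set of variables absent from $I(\calm_{k,T})$ and hence determines the tree. Your argument does not rule out, for example, that all generators of $I_R$ lie in a subring generated by only four of the eight variables. The paper closes this gap by invoking the symmetry group of $T$: the eight variables fall into two orbits $\{p_{14},p_{15},p_{24},p_{25}\}$ and $\{p_{13},p_{23},p_{34},p_{35}\}$; it argues (using the $4$-leaf local identifiability and the explicit toric relations) that no relation can be supported on a single orbit, forces any relation to meet both, and then averages over the group action to produce a single polynomial involving all eight. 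You would need an analogous step---either the symmetry argument or a direct check that each of the eight variables is algebraically dependent on the others over the model---to reach the conclusion as stated and as used downstream.
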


\begin{proof}
First of all, let us make sense of the statement that
 ``none of the generators of the vanishing ideal 
$I( \calm_{k,T})$ involve the indeterminates $p_{12}$ and
$p_{45}$''.  This is equivalent to saying that  if $(p_{12}, p_{13}, \ldots, p_{35}, p_{45})$ 
is a generic point in $\calm_{k,T}$
then so is $(p_{12} + \epsilon_1, p_{13}, \ldots, p_{35}, p_{45} + \epsilon_2)$ for small $\epsilon_1$
and $\epsilon_2$. 

First, we consider a simplified version of the parametrization where
\[
x_{12} = a_1a_2,  x_{13} = a_1a_3a_6, x_{14} = a_1a_4a_6a_7, x_{15} = a_1a_5a_6a_7,
\]
\[
x_{23} = a_2a_3a_6, x_{24} = a_2a_4a_6a_7, x_{25} = a_2a_5a_6a_7, 
x_{34} = a_3a_4a_7, x_{35} = a_3a_5a_7,x_{45} = a_4a_5.
\]
This is the parametrization of a certain toric ideal associated 
to initial ideals of the Grassmannian \cite{Speyer2004}
and it is known that the vanishing ideal is generated by the following polynomials:
\[
x_{13}x_{24} - x_{14}x_{23}, x_{13}x_{25} - x_{15}x_{23}, 
x_{14}x_{25} - x_{15}x_{24}, x_{15}x_{34} - x_{14}x_{25},
x_{25}x_{34} - x_{24}x_{35}.
\]
It is directly seen that none of these polynomials involve either of the variables
$x_{12}$ or $x_{45}$, in particular, these coordinates can be moved freely while staying in
the variety defined by these equations.  Let $I$ be the ideal defined by these equations and
$\calm_T$ the resulting image of the parametrization for all $a_i$ parameters in $(0,1]$.

Now our model $\calm_{k,T}$ is obtained from  $\calm_T$ by applying the function
\[
f_k(y,\theta) =  2(m - k + 1) \left( 1 - \frac{1}{4^k} \sum_{h = 0}^k  \binom{k}{h} y^h \frac{3^{h+1}}{3 + 8 \theta h} \right)
\]
simultaneously to each coordinate while letting $\theta$ range over $(0,\infty)$.  
Since $f_k(y, \theta)$ is not the zero function and depends nontrivially on $y$,
we can see that if we want to make a small perturbation to the value
$p_{12} = f_k(x_{12}, \theta)$, without changing any of the other $p_{ij}$ values,
 we can do this by perturbing $x_{12}$ and leaving all the other $x_{ij}$'s fixed.
This is possible because no ideal generators of $I$ involved the variables $x_{12}$.  A similar
argument holds with respect to the coordinate $p_{45}$ which proves the first
part of the proposition.

To see that there are equations in $I( \calm_{k,T})$  that do involve all the other variables,
we make two observations.  First, the natural symmetry group of the tree $T$, translates into
symmetries of the ideal $I( \calm_{k,T})$.  In particular, the variables fall into
$3$ orbits under this symmetry group $\{ p_{12}, p_{45} \},$  $\{p_{14}, p_{15}, p_{24}, p_{25} \}, $
and $\{p_{13}, p_{23}, p_{34}, p_{35} \}$. 
The set $\calm_{k,T}$ has dimension $8$, by local identifiability of parameters when $k \geq 2$.
There cannot be any equation involving only the variables  $\{p_{14}, p_{15}, p_{24}, p_{25} \} $,
since such equations would already by implied when looking at $4$ leaf trees (since only
$4$ indices are involved) and we know there are no such relations.  There
also cannot be any relations involving only the variables $\{p_{13}, p_{23}, p_{34}, p_{35} \}$,
because if there were, we could also find such a relation in the ideal $I$ associated
to the parametrization in the $x_{ij}$.  But there is clearly no such
equation.  Since $\calm_{k,T}$ has dimension $8$, there must exist some equation,
any such equation must involve some variables from the set $\{p_{14}, p_{15}, p_{24}, p_{25} \} $
and some from the set  $\{p_{13}, p_{23}, p_{34}, p_{35} \}$.  Then taking the orbit
of such an equation and adding the equations together (with some random coefficients if necessary)
will produce an equation in $I( \calm_{k,T})$ involving all of the variables except
$p_{12}$ and $p_{45}$, as desired.
\end{proof}

\begin{theorem}
For $k=1$, the tree parameter of the $1$-mer multispecies Jukes-Cantor coalescent model
is identifiable for all trees on $n \geq 4$ leaves.  For $k \geq 2$, the tree parameter
of the $k$-mer multispecies Jukes-Cantor coalescent model is identifiable for
all trees on $n \geq 5$ leaves.
\end{theorem}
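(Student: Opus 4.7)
The plan is to reduce the general statement to two base cases via Proposition \ref{prop:identwithideal}: I must produce mutually distinguishing nonzero polynomials $f \in I(\calm_{k,T}) \setminus I(\calm_{k,T'})$ and $f' \in I(\calm_{k,T'}) \setminus I(\calm_{k,T})$ for every pair $T \neq T'$ of unrooted binary trees on $n$ leaves. The base cases are $n=4$ for $k=1$ (supplied by Proposition \ref{prop:4leafinvariant}) and $n=5$ for $k \geq 2$ (supplied by Proposition \ref{prop:5leafinvariant}).

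The $k=1$, $n=4$ base case is essentially immediate: Proposition \ref{prop:4leafinvariant} attaches a distinct principal generator to each of the three binary 4-leaf trees, each generator a ``four-point'' quadratic involving a different pair of products, and a direct expansion shows that the generator for one tree does not identically vanish on the parametrized image of another. For the $k \geq 2$, $n=5$ base case, I exploit the cherry structure revealed by Proposition \ref{prop:5leafinvariant}. A binary $5$-leaf tree is determined by its unordered pair of cherries, so any $T \neq T'$ gives some pair $\{i,j\}$ that is a cherry of exactly one of them, say $T'$. Applied (with relabeling) to $T'$, Proposition \ref{prop:5leafinvariant} says $I(\calm_{k,T'})$ is generated by polynomials not involving $p_{ij}$, so the Zariski closure of $\calm_{k,T'}$ is a cylinder in the $p_{ij}$ direction. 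Meanwhile, Proposition \ref{prop:5leafinvariant} applied to $T$ produces an invariant $f \in I(\calm_{k,T})$ involving $p_{ij}$ nontrivially (since $p_{ij}$ is not a cherry variable of $T$), and no polynomial of positive degree in $p_{ij}$ can vanish on such a cylinder. Swapping $T$ and $T'$ supplies the symmetric $f'$.

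The reduction to the base cases is formal. For any leaf subset $S \subseteq [n]$, the projection of $\calm_{k,T}$ onto the coordinates $\{p_{ij} : i, j \in S\}$ equals $\calm_{k, T|_S}$---paths restrict to paths under taking induced subtrees, and products of $a_e \in (0,1]$ remain surjective onto $(0,1]$---so $I(\calm_{k, T|_S}) = I(\calm_{k,T}) \cap \rr[p_{ij} : i,j \in S]$, and any distinguishing polynomial for the restrictions lifts verbatim. By the classical quartet characterization of phylogenetic trees, distinct $n$-leaf binary trees already disagree on some $4$-leaf restriction and hence on any $5$-leaf restriction containing that quartet, so the $k=1$ argument uses $|S|=4$ and the $k \geq 2$ argument uses $|S|=5$. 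The main obstacle I anticipate is the cylinder step in the $n=5$ base case: I need to verify carefully that Proposition \ref{prop:5leafinvariant}'s ``nontrivial'' generator actually has positive degree in the specific variable $p_{ij}$ that I exploit, and that the phrase ``no generator involves $p_{12}$ or $p_{45}$'' genuinely forces a product decomposition of the Zariski closure of $\calm_{k,T'}$ (a property of the ideal itself, not merely of one convenient generating set). Everything after that is routine.
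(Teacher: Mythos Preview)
Your plan is exactly the paper's: handle the base cases $n=4$ (for $k=1$) and $n=5$ (for $k\geq 2$) via Propositions~\ref{prop:4leafinvariant} and~\ref{prop:5leafinvariant}, then reduce arbitrary $n$ to the base case by restricting to a leaf subset on which the two trees already disagree. Your explicit verification that the coordinate projection of $\calm_{k,T}$ onto a leaf subset $S$ equals $\calm_{k,T|_S}$, so that $I(\calm_{k,T|_S}) = I(\calm_{k,T}) \cap \rr[p_{ij}:i,j\in S]$ and invariants lift verbatim, is a welcome expansion of what the paper compresses into one sentence about quartets determining trees.

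The one place your argument fails as written is the sentence ``no polynomial of positive degree in $p_{ij}$ can vanish on such a cylinder.'' That is simply false: $xy$ has positive $y$-degree yet vanishes on the cylinder $\{x=0\}\subset\rr^2$. More generally, if $\overline{\calm_{k,T'}} = V \times \rr_{p_{ij}}$, then $f=\sum_a f_a\,p_{ij}^a$ lies in $I(\calm_{k,T'})$ iff every coefficient $f_a$ lies in $I(V)$, and nothing you have said rules this out for the invariant $f$ supplied by Proposition~\ref{prop:5leafinvariant} applied to $T$. You were right to flag this step as the obstacle. The repair---which is also what the paper's terse ``distinct pair of absent variables $\Rightarrow$ nontrivial polynomials in the set difference'' is invoking---is to compare the varieties globally rather than chase a single polynomial. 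Both $\overline{\calm_{k,T}}$ and $\overline{\calm_{k,T'}}$ are irreducible (Zariski closures of rational images of an irreducible domain) and of the same dimension $2n-2=8$ (Corollary~\ref{cor:localident}), so either they coincide or neither contains the other. They cannot coincide: by Proposition~\ref{prop:5leafinvariant} the coordinate directions along which the variety is translation-invariant are exactly the two cherry variables, and the cherry pairs of $T$ and $T'$ differ. Non-containment both ways then supplies the polynomials required by Proposition~\ref{prop:identwithideal}. Note that this uses the full strength of Proposition~\ref{prop:5leafinvariant}: not merely that some ideal element involves each non-cherry variable, but that the set of ``free'' coordinates is \emph{precisely} the cherry pair---this is what the paper means by ``the ones that do not appear in any generator,'' and it is what you should extract from that proposition rather than the weaker existence statement.
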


\begin{proof}
For $k = 1$, Proposition \ref{prop:4leafinvariant} shows that
we can distinguish between $4$ leaf trees using the invariants.  Then we use
the basic fact that if we know the subtrees on all $4$-leaf subsets
we can recover the underlying tree (that is, the quartets in the tree determine the tree, see 
e.g.~\cite{Semple2003}).

For $k \geq 2$, Proposition \ref{prop:5leafinvariant} shows that we
can distinguish between $5$ leaf trees using the invariants of the tree.
Indeed, for each $5$ leaf tree $T$ there will be a distinct set of pair of variables
which are the ones that do not appear in any generator of $I(\calm_{k,T})$.  This 
guarantees that for any $T$, $T'$ that are different $5$ leaf trees that
there are nontrivial polynomials in $I(\calm_{k,T}) \setminus I(\calm_{k,T'})$,
guaranteeing identifiability of the tree parameter from Proposition \ref{prop:identwithideal}.
Once $5$ leaf trees are identified, this tells us all $4$ leaf subtrees in our
underlying tree, which identifies the tree for an arbitrary number of leaves $\geq 5$.
\end{proof}


\section{Identifiability from Combinations of $k$-mer Distances}\label{sec:kand1}
In previous sections, we studied the identifiability of parameters from the expected $k$-mer distances assembled for multiple pairs of taxa. 
Here, we consider a single pair of taxa, and we instead assemble expected $k$-mer distances for two distinct values of $k$, which we denote $k$ and $l$.
In this section, we show that 
numerical parameters are identifiable in this context.
for any set of two or more taxa, when the data consist of
both pairwise $k$-mer distances and pairwise $l$-mer distances for $l \ne k$. 

\begin{theorem}        \label{thm:pairs}
Let $1\le k < l$ and let $\phi: (0,1] \times \R_{>0} \to \R^2$ be the map given by $\phi(x,\theta) = \left( f_k(x,\theta), f_l(x,\theta) \right)$, where $f_k$ is defined by Equation \ref{eqn:rationalparam}. Then $\phi$ is one-to-one.
\end{theorem}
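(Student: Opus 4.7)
The plan is to reduce injectivity of $\phi$ to showing that $\det J_\phi$ does not change sign on $(0,1]\times\R_{>0}$. First I would observe that, by the same term-by-term sign arguments used for $D_{ij}$ and $B_{ij}$ in Section \ref{sec:local}, $\partial_x f_k < 0$ and $\partial_\theta f_k > 0$ everywhere. In particular $f_k(\cdot,\theta)$ is strictly decreasing in $x$, so any two preimages of a point that share the same $\theta$-coordinate automatically agree in $x$. If instead $\phi(x_1,\theta_1) = \phi(x_2,\theta_2)$ with $\theta_1 < \theta_2$, then the level set $\{f_k = f_k(x_1,\theta_1)\}$ is, by the implicit function theorem, the graph of a smooth function $x = \rho(\theta)$ defined on a $\theta$-interval containing both $\theta_1$ and $\theta_2$, and along this curve
\[
\frac{d}{d\theta} f_l(\rho(\theta),\theta) = \frac{\det J_\phi(\rho(\theta),\theta)}{\partial_x f_k(\rho(\theta),\theta)}.
\]
Since the denominator is strictly negative, it will suffice to prove $\det J_\phi > 0$ throughout the parameter space: this forces $f_l \circ(\rho(\cdot),\cdot)$ to be strictly monotone in $\theta$, contradicting $f_l(x_1,\theta_1) = f_l(x_2,\theta_2)$.

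The key computational device will be an integral representation of $f_k$ obtained from $\frac{1}{1+8h\theta/3} = \int_0^1 s^{8h\theta/3}\,ds$. Exchanging the sum in Equation \ref{eqn:rationalparam} with the integral and applying the binomial theorem gives
\[
f_k(x,\theta) = 2(m-k+1)\left(1 - \frac{1}{4^k}\int_0^1 (1+3x s^{\alpha})^k\,ds\right), \qquad \alpha := \tfrac{8\theta}{3}.
\]
Differentiating under the integral sign, with $u_k(s) := (1+3xs^\alpha)^{k-1}s^\alpha$, one obtains explicit formulas for $\partial_x f_k$ and $\partial_\theta f_k$ as positive constants times $\int u_k(s)\,ds$ and $-\int u_k(s)\ln s\,ds$ respectively, and analogous formulas in $l$. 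Substituting these into $\det J_\phi$ one finds that a strictly positive prefactor (proportional to $\frac{xkl(m-k+1)(m-l+1)}{4^{k+l}}$) factors out, and the sign of $\det J_\phi$ equals the sign of
\[
\int_0^1 u_k(s)\,ds \int_0^1 u_l(t)\ln t\,dt \;-\; \int_0^1 u_k(s)\ln s\,ds \int_0^1 u_l(t)\,dt.
\]

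The final step is a standard symmetrization in $s$ and $t$, which rewrites the expression above as
\[
\iint_{0<s<t<1}\bigl[u_k(s)u_l(t) - u_k(t)u_l(s)\bigr]\bigl[\ln t - \ln s\bigr]\,ds\,dt.
\]
On this region $\ln t - \ln s > 0$, and the first bracket is also strictly positive since
\[
\frac{u_k(s)u_l(t)}{u_k(t)u_l(s)} = \left(\frac{1+3x s^{\alpha}}{1+3x t^{\alpha}}\right)^{k-l}
\]
has base in $(0,1)$ (because $x>0$ and $s<t$) and negative exponent $k-l$. Hence the integrand is strictly positive, so $\det J_\phi > 0$ everywhere, completing the proof. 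The main obstacle is precisely this sign analysis of $\det J_\phi$: a direct expansion of the rational functions $f_k,f_l$ quickly produces unwieldy sums, and it is the integral representation combined with the symmetrization trick that makes the sign transparent.
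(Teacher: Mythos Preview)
Your argument is correct and takes a genuinely different route from the paper's. Both proofs ultimately establish $\det J_\phi>0$ on the whole parameter domain, but differ in two respects. First, for the reduction to the Jacobian sign, the paper invokes the Gale--Nikaid\^o theorem \cite{gale1965jacobian} on nonvanishing principal minors, whereas you run a direct level-curve argument by hand (the same idea specialized to two variables, using $\partial_x f_k<0$, $\partial_\theta f_k>0$). Second, and more substantively, for the sign of $\det J_\phi$ the paper expands it as a double sum $\sum_{h,i} a_{hi}\,y^{h+i-1}$, pairs each negative term $a_{hi}$ (for $h>i$) against its positive mirror $a_{ih}$, and reduces to the binomial inequality $\binom{k}{i}\binom{l}{h}>\binom{k}{h}\binom{l}{i}$; you instead use the integral representation $f_k\propto\int_0^1(1+3xs^{\alpha})^k\,ds$ and a symmetrization in $(s,t)$ to make positivity immediate. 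Your approach bypasses the combinatorial inequality and would extend more readily (e.g.\ to non-integer exponents), while the paper's stays purely algebraic in the original rational parametrization. One small point to tighten: the implicit function theorem gives the curve $x=\rho(\theta)$ only locally, so you should note that since $\partial_x f_k<0$ globally and the left endpoint $f_k(1,\theta)$ of the range of $f_k(\cdot,\theta)$ is increasing in $\theta$, the common value $c=f_k(x_2,\theta_2)\ge f_k(1,\theta_2)\ge f_k(1,\theta)$ stays in that range for every $\theta\in[\theta_1,\theta_2]$, so $\rho$ is defined and takes values in $(0,1]$ on the whole interval.
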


\begin{proof}
    To simplify notation, we reparametrize $f_k$ of Equation \ref{eqn:rationalparam} as follows: Let $y = 3x$, $\xi = 8\theta/3$, and define
\begin{align*}
    g_k(y,\xi) = 1-\frac{1}{4^k}\sum_{h=0}^k { k \choose h } \frac{y^h}{1+\xi h}.
\end{align*}
Then $g_k(y,\xi) = f_k(x,\theta)$. We consider the map $\psi: \R_{>0} \times \R_{>0} \to \R^2$ defined by $\psi(y,\xi) = (g_k(y,\xi), g_l(y,\xi))$.
It suffices to show that $\psi$ is one-to-one. 

By Theorem 7 of \cite{gale1965jacobian}, a function $F: \Omega \to \R^2$ is one-to-one on a rectangular region $\Omega$ if its partial derivatives are continuous and no principal minors of its Jacobian vanish. The partial derivatives of $\psi$ are continuous on the positive orthant, by the following formulas:
\begin{align*}
    \frac{\partial g_k}{\partial \xi} &= \sum_{h=1}^k { k \choose h} \frac{hy^{h}}{(1+\xi h)^2}\\ 
    \frac{\partial g_k}{\partial y} &= -\sum_{h=1}^k { k \choose h} \frac{hy^{h-1}}{1+\xi h}.
\end{align*}

Thus, to show that $\psi$ is one-to-one, it suffices to show that the principal minors of the Jacobian matrix of $\psi$ do not vanish on the positive orthant. This is clear for the $1 \times 1$ minors, so it suffices to show that the Jacobian determinant of $\psi$ does not vanish.

\begin{align*}
    \det (d \psi) (y,\xi) &= 
    \frac{\partial g_k}{\partial y}\frac{\partial g_l}{\partial \xi}
    -\frac{\partial g_l}{\partial y}\frac{\partial g_k}{\partial \xi} = 
    -\frac{\partial g_l}{\partial y}\frac{\partial g_k}{\partial \xi} 
+\frac{\partial g_k}{\partial y}\frac{\partial g_l}{\partial \xi}\\
    &= 
    \sum_{h=1}^k { k \choose h} \frac{hy^{h}}{(1+\xi h)^2}\sum_{i=1}^l { l \choose i} \frac{iy^{i-1}}{1+\xi i}
    -\sum_{h=1}^k { k \choose h} \frac{hy^{h-1}}{1+\xi h}\sum_{i=1}^l { l \choose i} \frac{iy^{i}}{(1+\xi i)^2}\\
    &= \sum_{h=1}^k\sum_{i=1}^l{ k \choose h} { l \choose i} 
    \frac{hy^{h}}{(1+\xi h)^2} \frac{iy^{i-1}}{1+\xi i}
    -\frac{hy^{h-1}}{1+\xi h} \frac{iy^{i}}{(1+\xi i)^2}\\
    &= \sum_{h=1}^k\sum_{i=1}^l{ k \choose h} { l \choose i} \frac{(1+\xi i) - (1+\xi h)}{(1+\xi h)^2(1+\xi i)^2}hiy^{i+h-1}\\
    &= \sum_{h=1}^k\sum_{i=1}^l{ k \choose h} { l \choose i} \frac{\xi hi(i-h)}{(1+\xi h)^2(1+\xi i)^2}y^{i+h-1}.
\end{align*}
Let
\begin{align*}
a_{hi} = { k \choose h} { l \choose i} \frac{\xi hi(i-h)}{(1+\xi h)^2(1+\xi i)^2}.
\end{align*}
Then the coefficient of $y^m$ in $\det(d\psi)$ is:
\begin{equation}
    [y^m] \det (d\psi) = \sum_{i+h-1=m} a_{hi}. \label{eqn:coeffxm}
\end{equation}

To show that $\det (d\psi)$ does not vanish on the positive orthant, it suffices to show that the above sum is positive for all $m$ and for all $\xi > 0$. To see this, we view the $a_{hi}$ as the entries of a matrix $(a_{hi})_{(h,i) \in [k]\times[l]}$. 
For fixed $m$, any term $a_{hi}$ which appears in (\ref{eqn:coeffxm}), lies along the ``cross-diagonal" $h+i-1=m$ of $A$. 
The signs of the entries of this matrix are
\begin{align*}
    (\text{sign}(a_{hi}))_{(h,i) \in [k]\times[l]} =
    \left(\begin{array}{cccccccc}
        0 & + & + & + & + & + & \ldots & + \\
        - & 0 & + & + & + & + & \ldots & + \\
        - & - & 0 & + & + & + & \ldots & + \\
        \vdots & & & \ddots & & & & \vdots \\
        - & - & \hdots & - & 0 & + & \ldots & + \\
    \end{array}\right).
\end{align*}
From the pattern of signs, we see that if $a_{hi}$ is negative, then $h>i$. Since $l>k$, the term $a_{ih}$ is also a term in (\ref{eqn:coeffxm}), and it is positive. Thus it suffices to show that $a_{ih} > -a_{hi}$ for $h>i$. By simple algebra, this is equivalent to the following inequality:
\begin{align*}
    { k \choose i }{ l \choose h }-{ k \choose h }{ l \choose i }>0 \text{ for } h > i.
\end{align*}

This can be verified by expanding the binomial coefficients:
\begin{align*}
    { k \choose i }{ l \choose h }-{ k \choose h }{ l \choose i } &= \frac{k!}{i!(k-i)!}\frac{l!}{h!(l-h)!}-\frac{k!}{h!(k-h)!}\frac{l!}{i!(l-i)!}\\
                                                                  &= \frac{1}{i!h!}\left(\frac{k!}{(k-i)!}\frac{l!}{(l-h)!}-\frac{k!}{(k-h)!}\frac{l!}{(l-i)!}\right)\\
                                                                  &= \frac{1}{i!h!}\left((k)_{i} (l)_{h}-(k)_{h}(l)_{i}\right)\\
                                                                  &= \frac{(l)_i(k)_i}{i!h!}\left((l-i)_{h-i} -(k-i)_{h-i}\right) > 0
\end{align*}

Thus, all principal minors of the Jacobian $d\psi$ are non-vanishing on the positive orthant. Therefore $\psi$ is injective on the positive orthant by Theorem 7 of \cite{gale1965jacobian}. Thus $\phi$ is injective.
\end{proof}

\begin{corollary}
Let $k \neq l$ be positive integers.
Let $T$ be an unrooted tree with $n \geq 2$ leaves and $m$ edges.  The map $\phi_{k,l,T} :  (0,1]^m \times \rr_{> 0}
 \rightarrow  \rr^{n(n-1)}$  with
\[
\phi_{k,l,T}(a,\theta)  =  (\phi_{k,T}(a,\theta), \phi_{l,T}(a, \theta))
\]
is one-to-one.  In particular, the numerical parameters of the model $a, \theta$ are identifiable
given $k$-mer and $l$-mer distances between all pairs of taxa.
\end{corollary}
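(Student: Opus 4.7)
The plan is to reduce the claim to the single-pair statement of Theorem~\ref{thm:pairs} and then invert a tree metric on a known topology. Suppose $(a,\theta)$ and $(a',\theta')$ in $(0,1]^m \times \rr_{>0}$ satisfy $\phi_{k,l,T}(a,\theta) = \phi_{k,l,T}(a',\theta')$. For each pair of leaves $1 \le i < j \le n$, the corresponding two coordinates of $\phi_{k,l,T}$ are $\bigl(f_k(x_{ij},\theta),\, f_l(x_{ij},\theta)\bigr)$, where $x_{ij} = \prod_{e \in P(i,j)} a_e \in (0,1]$, and analogously for $x'_{ij}$. The hypothesis thus forces $\bigl(f_k(x_{ij},\theta),\, f_l(x_{ij},\theta)\bigr) = \bigl(f_k(x'_{ij},\theta'),\, f_l(x'_{ij},\theta')\bigr)$ for every pair, and Theorem~\ref{thm:pairs} applied coordinate-wise immediately yields $\theta = \theta'$ together with $x_{ij} = x'_{ij}$ for every pair. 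In particular, any single pair already pins down $\theta$, with the remaining pairs providing mutually consistent redundancy.

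It then remains to recover the transformed branch lengths $a_e$ from the collection $(x_{ij})_{i<j}$ on the known topology $T$. Taking negative logarithms converts the products $x_{ij} = \prod_{e \in P(i,j)} a_e$ into path-sum identities, so this step is exactly the classical problem of reconstructing edge weights of a tree from its pairwise leaf-to-leaf distances when the topology is fixed. For any phylogenetic tree $T$ (no internal vertex of degree two), this reconstruction is unique: each internal edge weight $a_e$ can be isolated via a standard four-point combination of the form $\tfrac{1}{2}\bigl(\log x_{ik} + \log x_{jl} - \log x_{ij} - \log x_{kl}\bigr) = \log a_e$, where $i,j$ lie on one side of $e$ and $k,l$ on the other, and each pendant edge weight follows by an analogous combination using any two taxa on the opposite side. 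The degenerate $n=2$ case is immediate with $a_1 = x_{12}$. Combining with the previous paragraph gives $a = a'$ and $\theta = \theta'$, so $\phi_{k,l,T}$ is injective and the parameters $(a,\theta)$ are identifiable.

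The only substantive work is Theorem~\ref{thm:pairs}, which is already in hand; the remaining tree-metric inversion is routine combinatorial phylogenetics. No genuine obstacle is expected beyond what the theorem already provides, since the hypotheses $x_{ij}, x'_{ij} \in (0,1]$ and $\theta, \theta' \in \rr_{>0}$ exactly match the domain on which the theorem establishes one-to-oneness.
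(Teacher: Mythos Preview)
Your proof is correct and follows essentially the same route as the paper: apply Theorem~\ref{thm:pairs} pair-by-pair to recover $\theta$ and all the $x_{ij}$, take logarithms to obtain an additive tree metric, and then invoke the standard fact that edge weights of a fixed phylogenetic tree are determined by the leaf-to-leaf distances. The paper is terser on the final step (it simply cites that branch lengths in an additive tree can be recovered from the pairwise distances), whereas you spell out explicit four-point and three-point combinations; your parenthetical remark excluding degree-two internal vertices is a sensible caveat that the paper leaves implicit.
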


\begin{proof}
From Theorem \ref{thm:pairs}, we know that we can recover all the products
$x_{ij} = \prod_{e \in P(i,j)} a_e$ for each pair of taxa $i,j$ from the
$k$-mer and $l$-mer vectors.  Since none of the $a_e$ are zero, none of the
$x_{ij}$ are zero either.  Taking logarithms, we have:
\[
-\log x_{ij}  = - \sum_{e \in P(i,j) } \log a_e  = \frac{4}{3}   \sum_{ e \in P(i,j)}  w_e,
\]
so the matrix of $-\log x_{ij}$ is an additive tree distance.  The branch lengths
in an additive tree can be recovered from the pairwise distances.
\end{proof}


\section{Discussion}\label{sec:conclusion}

The expected $k$-mer distance derived here provides the basis for a statistically consistent $k$-mer based method which generalizes the method devised by Allman, Rhodes, and the second author. This generalization extends the $k$-mer method to the case in which sequence blocks correspond to genes, whose gene trees are modeling by the coalescent. 
We have derived our results under a relatively simple setting where there is a global unknown
effective population size $N$ that we use over the entire tree, and we work with the
Jukes-Cantor substitution model.  It would be natural to try to
extend these results to more general settings with a more general substitution model 
and allowing the effective population size to vary over branches of the species tree.

Our identifiability results make a first suggestion for an algorithm for
reconstructing the species tree based on $k$-mer distances from multiple genes.  Namely,
using the results in Section $5$, for distinct positive integers $k$ and $l$, $k$-mer and $l$-mer frequency distributions can be used to
estimate the divergence time between each pair of taxa.  These pairwise distances can
be used in a distance-based method like Neighbor-Joining to reconstruct an
evolutionary tree. 
It remains to be seen how this methodology will perform against other methods for
reconstructing species trees. Our identifiability results are the first step
in showing that methods based on $k$-mers can be derived for these problems.


\section*{Acknowledgments}

Chris Durden was partially supported by the US National Science Foundation (DMS 1615660).
Seth Sullivant was partially supported by the US National Science Foundation (DMS 1615660) and
by the David and Lucille Packard Foundation.

    \bibliographystyle{plain}
    \bibliography{jukes_cantor_kmers}

\begin{thebibliography}{10}

\bibitem{Allman2011}
Elizabeth~S. Allman, James~H. Degnan, and John~A. Rhodes.
\newblock Determining species tree topologies from clade probabilities under
  the coalescent.
\newblock {\em J. Theoret. Biol.}, 289:96--106, 2011.

\bibitem{allman2017statistically}
Elizabeth~S Allman, John~A Rhodes, and Seth Sullivant.
\newblock Statistically consistent k-mer methods for phylogenetic tree
  reconstruction.
\newblock {\em Journal of Computational Biology}, 24(2):153--171, 2017.

\bibitem{Blackshields2010}
G.~Blackshields, F.~Sievers, W.~Shi, A.~Wilm, and D.G. Higgins.
\newblock Sequence embedding for fast construction of guide trees for multiple
  sequence alignment.
\newblock {\em Algorithms Mol. Biol.}, 5:21, 2010.

\bibitem{Cox2015}
David~A. Cox, John Little, and Donal O'Shea.
\newblock {\em Ideals, varieties, and algorithms}.
\newblock Undergraduate Texts in Mathematics. Springer, Cham, fourth edition,
  2015.
\newblock An introduction to computational algebraic geometry and commutative
  algebra.

\bibitem{Daskalakis2013}
C.~Daskalakis and S.~Roch.
\newblock Alignment-free phylogenetic reconstruction: Sample complexity via a
  branching process analysis.
\newblock {\em Annals of Applied Probability}, 23:693--721, 2013.

\bibitem{Edgar2004b}
R.C. Edgar.
\newblock {MUSCLE}: a multiple sequence alignment method with reduced time and
  space complexity.
\newblock {\em BMC Bioinformatics}, 5:113, 2004.

\bibitem{Edgar2004a}
R.C. Edgar.
\newblock {MUSCLE}: multiple sequence alignment with high accuracy and high
  throughput.
\newblock {\em Nucleic Acids Res.}, 32:1792--1797, 2004.

\bibitem{gale1965jacobian}
David Gale and Hukukane Nikaido.
\newblock The {J}acobian matrix and global univalence of mappings.
\newblock {\em Mathematische Annalen}, 159(2):81--93, 1965.

\bibitem{kingman1982coalescent}
John Frank~Charles Kingman.
\newblock The coalescent.
\newblock {\em Stochastic processes and their applications}, 13(3):235--248,
  1982.

\bibitem{leung2016identifiability}
Dennis Leung, Mathias Drton, Hisayuki Hara, et~al.
\newblock Identifiability of directed {G}aussian graphical models with one
  latent source.
\newblock {\em Electronic Journal of Statistics}, 10(1):394--422, 2016.

\bibitem{mcvean2002genealogical}
Gilean~AT McVean.
\newblock A genealogical interpretation of linkage disequilibrium.
\newblock {\em Genetics}, 162(2):987--991, 2002.

\bibitem{pamilo1988relationships}
Pekka Pamilo and Masatoshi Nei.
\newblock Relationships between gene trees and species trees.
\newblock {\em Molecular biology and evolution}, 5(5):568--583, 1988.

\bibitem{rannala2003bayes}
Bruce Rannala and Ziheng Yang.
\newblock Bayes estimation of species divergence times and ancestral population
  sizes using dna sequences from multiple loci.
\newblock {\em Genetics}, 164(4):1645--1656, 2003.

\bibitem{Semple2003}
Charles Semple and Mike Steel.
\newblock {\em Phylogenetics}, volume~24 of {\em Oxford Lecture Series in
  Mathematics and its Applications}.
\newblock Oxford University Press, Oxford, 2003.

\bibitem{Sievers2011}
F.~Sievers, A.~Wilm, D.G. Dineen, T.J. Gibson, K.~Karplus, W.~Li, R.~Lopez,
  H.~McWilliam, M.~Remmert, J.~S\"oding, J.D. Thompson, and D.G. Higgins.
\newblock Fast, scalable generation of high-quality protein multiple sequence
  alignments using {C}lustal {O}mega.
\newblock {\em Molecular Systems Biology}, 7:539, 2011.

\bibitem{Speyer2004}
David Speyer and Bernd Sturmfels.
\newblock The tropical {G}rassmannian.
\newblock {\em Adv. Geom.}, 4(3):389--411, 2004.

\bibitem{takahata1986attempt}
Naoyuki Takahata.
\newblock An attempt to estimate the effective size of the ancestral species
  common to two extant species from which homologous genes are sequenced.
\newblock {\em Genetical research}, 48(03):187--190, 1986.

\bibitem{takahata1995divergence}
Naoyuki Takahata, Yoko Satta, and Jan Klein.
\newblock Divergence time and population size in the lineage leading to modern
  humans.
\newblock {\em Theoretical population biology}, 48(2):198--221, 1995.

\bibitem{wakeley2009coalescent}
John Wakeley.
\newblock {\em Coalescent theory: An introduction}, volume~1.
\newblock Roberts \& Company Publishers Greenwood Village, Colorado, 2009.

\end{thebibliography}

\end{document}